\documentclass[oneside]{article}
\usepackage{caption}
\usepackage{qic,epsfig}
\usepackage{amsfonts}
\usepackage[cmex10]{amsmath}
\usepackage{tabularx}
\usepackage{url}
\textwidth=5.6truein
\textheight=8.0truein

\renewcommand{\thefootnote}{\fnsymbol{footnote}}  


\begin{document}
\setlength{\textheight}{8.0truein}    

\runninghead{CONSTRUCTIONS OF $q$-ARY ENTANGLEMENT-ASSISTED QUANTUM $\ldots$}
            {Jihao Fan, Hanwu Chen, Juan Xu}

\normalsize\textlineskip
\thispagestyle{empty}
\setcounter{page}{423}

\copyrightheading{16}{5\&6}{2016}{0423--0434}

\vspace*{0.88truein}

\alphfootnote

\fpage{423}

\centerline{\bf
CONSTRUCTIONS OF $q$-ARY ENTANGLEMENT-ASSISTED QUANTUM}
\vspace*{0.035truein}
\centerline{\bf MDS CODES WITH MINIMUM DISTANCE GREATER THAN $q+1$
}
\vspace*{0.37truein}
\centerline{\footnotesize
JIHAO FAN}
\vspace*{0.015truein}
\centerline{\footnotesize\it Department of Computer Science and Engineering, Southeast University}
\baselineskip=10pt
\centerline{\footnotesize\it Nanjing, Jiangsu 211189, China
}
\centerline{\footnotesize\it \url{fanjh12@seu.edu.cn}}
\vspace*{10pt}
\centerline{\footnotesize
HANWU CHEN}
\vspace*{0.015truein}
\centerline{\footnotesize\it Department of Computer Science and Engineering, Southeast University}
\baselineskip=10pt
\centerline{\footnotesize\it Nanjing, Jiangsu 211189, China}
\centerline{\footnotesize\it \url{hw_chen@seu.edu.cn}}
\vspace*{10pt}
\centerline{\footnotesize
JUAN XU}
\vspace*{0.015truein}
\centerline{\footnotesize\it College of Computer Science and Technology, Nanjing University of Aeronautics and Astronautics}
\baselineskip=10pt
\centerline{\footnotesize\it Nanjing, Jiangsu 210016, China}
\centerline{\footnotesize\it \url{juanxu@nuaa.edu.cn}}
\vspace*{0.225truein}
\publisher{(received date)}{(revised date)}

\vspace*{0.21truein}

\abstracts{
The entanglement-assisted stabilizer formalism provides a useful framework for con-
structing quantum error-correcting codes (QECC), which can transform arbitrary classical linear
codes into entanglement-assisted quantum error correcting codes (EAQECCs) by using
  pre-shared entanglement between the sender and the receiver.
In this paper, we
construct five classes of entanglement-assisted quantum MDS (EAQMDS) codes  based on classical MDS
codes by exploiting one or more pre-shared maximally entangled states. We show that these EAQMDS codes have much larger minimum distance than the standard  quantum MDS (QMDS)  codes of the same length, and three classes of these EAQMDS codes    consume only one pair of maximally entangled states.
}{}{}

\vspace*{10pt}

\keywords{Entanglement-assisted quantum error-correcting codes, Quantum error-correcting codes,
Maximal-distance-separable (MDS) codes, Maximally entangled state}
\vspace*{3pt}
\communicate{to be filled by the Editorial}

\vspace*{1pt}\textlineskip    
\section{Introduction}
Quantum error-correcting codes (QECC) play a key role  in protecting quantum information from decoherence and quantum noise.
The theory of quantum stabilizer codes allows one to import  classical additive codes that satisfy certain dual-containing relationship for use as a QECC  \cite{calderbank1998quantum,gottesman1997stabilizer, ketkar2006nonbinary}.
Recently, a more general framework called   entanglement-assisted stabilizer formalism  was developed to construct QECCs with the help of pre-shared entanglement between the sender and the receiver \cite{brun2006correcting}.    This framework has the advantage that it allows to construct QECCs from arbitrary classical linear codes, without the dual-containing constraint.
Currently,  many  works have  focused on  the  construction  of binary EAQECCs based on classical binary or quaternary linear codes, see \cite{hsieh2009entanglement,hsieh2011high,fujiwara2010entanglement, fujiwara2013characterization,wilde2014entanglement,lu2014entanglement},  since binary QECCs
might be the most useful ones in the future quantum computers and quantum communications. However,   nonbinary cases have received less attention. Nonbinary EAQECCs would be useful in some quantum communication protocols \cite{lidar2013quantum,wilde2013quantum}.  Just as in  the classical error-correcting codes (ECC) and standard QECCs,   EAQECCs
over higher alphabets  can be used for constructing easily decodable binary
 EAQECCs by using concatenation technology \cite{macwilliams1977theory,grassl1999quantum}. Furthermore, nonbinary QECCs and EAQECCs,  especially nonbinary  quantum MDS (QMDS) codes and  entanglement-assisted quantum MDS (EAQMDS) codes,  are of  significantly theoretical interest, since QMDS codes and EAQMDS codes can achieve the quantum
Singleton bound \cite{ketkar2006nonbinary} and the entanglement-assisted quantum
Singleton bound \cite{brun2006correcting}, respectively.

Let $q$ be  a prime power.  We use $\mathcal{Q}=[[n,k,d]]_q$ to denote a standard $q$-ary QECC   of length $n$ with size $q^k$ and minimum distance $d$.  Then $\mathcal{Q}$ is a $q^k$-dimensional subspace of the $q^n$-dimensional Hilbert space $(\mathbb{C}^q)^{\otimes n}$, which can detect up to $d-1$ and correct up to $\lfloor(d-1)/2\rfloor$ quantum errors.
The parameters of $\mathcal{Q}$ have to satisfy the
quantum Singleton bound: $k \leq n - 2d +2$ in \cite{ketkar2006nonbinary}.
If $\mathcal{Q}$ attains the quantum Singleton bound, then it is called a quantum maximum-distance-separable
(MDS) code. According to the MDS conjecture in \cite{ketkar2006nonbinary}, the maximal length of a QMDS code cannot exceed $q^2+1$, i.e., $n\leq q^2+1$,
except for  the trivial and some special cases in \cite{li2008quantum},  and except for the  existence of QMDS codes with  parameters $[[q^2+2,q^2-4,4]]_q$ for $q=2^m$ shown in \cite{grassl2015quantum}.
As mentioned in \cite{jin2013construction},  QMDS codes of length up to $q + 1$ have been constructed for all possible
dimensions, see \cite{rotteler2004quantum, grassl2004optimal}.  
However, the problem of constructing QMDS codes with length $n$ greater than $q+1$   is
much more difficult.
 Many QMDS codes with certain lengths between $q+1$ and $q^2+1$ have been obtained,
see \cite{li2010construction,jin2010application,la2011new, kai2013new,kai2013constacyclic,zhang2014new,wang2015new,chen2015application}.
  Up to now, the minimum distance of all known nontrivial $q$-ary QMDS codes is less than
or equal to $  q+1$, except for  a few sporadic QMDS codes with large minimum distance in \cite{grassl2015quantum}.  It seems very difficult to improve this limit by using the standard Euclidean or Hermitian construction.

Inspired by these works, in this paper, we propose several constructions of  EAQMDS codes based on classical MDS codes, and we get new  $q$-ary EAQMDS codes with minimum distance greater than $q+1$ for some certain code lengths, while consuming  a few pre-shared maximally entangled states.
If we denote a $q$-ary EAQECC  by $[[n,k,d;c]]_q$, where $c$ is the
number of maximally entangled states required, we get five classes of EAQMDS codes with parameters:

\begin{romanlist}
\item $[[q^2+1,q^2-2d+4,d;1]]_q$,
where $q$ is a prime power, $2\leq d\leq2q$ is an even integer.
\item $[[q^2,q^2-2d+3,d;1]]_q$,
where $q$ is a prime power, $q+1\leq d\leq2q-1$.
\item $[[q^2-1,q^2-2d+2,d;1]]_q$,
where $q$ is a prime power, $2\leq d\leq2q-2$.
\item
$[[\frac{q^2-1}{2},\frac{q^2-1}{2}-2d+4,d;2]]_q$, where  $q$ is an odd prime power,
$\frac{q+1}{2}+2 \leq d\leq  \frac{3}{2}q-\frac{1}{2}$.
\item
$[[\frac{q^2-1}{t},\frac{q^2-1}{t}-2d+t+2,d;t]]_q$, where $q$ is an odd prime power with $t|(q+1)$, $t\geq3 $ is an odd integer, and
$\frac{(t-1)(q+1)}{t}+2 \leq d\leq  \frac{(t+1)(q+1)}{ t}-2$.
\end{romanlist}

EAQMDS codes in (i)-(v) have minimum distance upper limit greater than $q+1$ by consuming a few pre-shared maximally entangled  states.  In particular,
each code in (i)-(iii) has nearly double  minimum distance upper limit of the standard  QMDS code  of the same length  constructed so far,  and   consumes only one pair of maximally  entangled states.  This means that these codes have much better error-correction abilities than the standard QMDS codes of the same length and consume little entanglement. 

This paper is organized as follows. In Section 2, we introduce some basic notations and
definitions of classical ECCs and EAQECCs. We propose  several constructions of EAQMDS codes in Section 3. The conclusion is given in Section 4.

\section{Preliminaries}
\noindent

Firstly, we review some basic results of classical RS codes, constacyclic  codes and several formulas for EAQECCs. For details
on classical ECCs and EAQECCs, see the literature \cite{macwilliams1977theory,berlekamp1968algebraic} and \cite{brun2006correcting,lidar2013quantum,wilde2008optimal}, respectively.

Let $p$ be a prime number and $q$ a power of $p$, i.e., $q=p^r$ for some $r>0$.
$\mathbb{F}_{q^2}$ denotes the finite field with $q^2$ elements.
For any $a\in\mathbb{F}_{q^2}$, we denote by $\overline{a}=a^q$   the conjugation of $a$.
For two vectors $\mathbf{x}=(x_1,x_2,\ldots,x_n)$ and $\mathbf{y}=(y_1,y_2,\ldots,y_n)\in\mathbb{F}_{q^2}^n$, their Hermitian inner product is defined as
\[
\langle\mathbf{x},\mathbf{y}\rangle_h=\sum_{i=1}^{n}\overline{x_i}y_i=\overline{x_1}y_1+\overline{x_2}y_2+\cdots+\overline{x_n}y_n.
\]
Let $\mathcal{C}=[n,k]$ be a $q^2$-ary linear code  of length $n$ and dimension $k$. The Hermitian dual code of $\mathcal{C}$ is defined as
\[
\mathcal{C}^{\bot_h}=\{\mathbf{x}\in\mathbb{F}_{q^2}^n|\langle\mathbf{x},
\mathbf{y}\rangle_h=0,\forall\mathbf{y}\in\mathcal{C}\}.
\]
If $\mathcal{C}\subseteq\mathcal{C}^{\bot_h}$, then $\mathcal{C}$ is called a Hermitian  self-orthogonal code. On the contrary, if
 $\mathcal{C}^{\bot_h}\subseteq\mathcal{C}$, then $\mathcal{C}$ is called a Hermitian  dual-containing code. Let $H=\left(a_{ij}\right)_{(n-k)\times n}$ be the parity check matrix of $\mathcal{C}$ over $\mathbb{F}_{q^2}$ with indices $1\leq i\leq n-k$ and $1\leq j\leq n$,  then the Hermitian conjugate of $H$ is defined as
\[
H^\dag=\left(\overline{a_{ji}}\right)_{n\times (n-k)},
\]
{ where  the dagger ($\dag$) denotes the conjugate transpose operation over matrices in $\mathbb{F}_{q^2}$.

 A Reed-Solomon code (denoted by $\mathcal{RS}(n,r)$) over $\mathbb{F}_{q^m}$
is a cyclic code of length $n=q^m-1$ with roots $\alpha$, $\alpha^{2},\ldots,\alpha^{r-1}$,
where $r$ is an integer with $1\leq r\leq n-2$, $\alpha$ is a
primitive element of $\mathbb{F}_{q^m}$. Its generator polynomial is $g(x)=(x-\alpha)(x-\alpha^2)\cdots(x-\alpha^{r-1})$. The parameters
of $\mathcal{RS}(n,r)$ are $[n,k,d]_{q^m}$, where $k=n-r+1$, $d=r$. The parity check matrix of $\mathcal{RS}(n,r)$ is given by
\begin{eqnarray}
\label{parity check matrix of general RS codes}
H_{\mathcal{RS}(n,r)}=
\left(
\begin{array}{cccc}
1&\alpha &\cdots&\alpha^{ n-1 }\\
1&\alpha^{2}&\cdots&\alpha^{2(n-1) }\\
\vdots&\vdots&\vdots&\vdots\\
1&\alpha^{r-1}&\cdots&\alpha^{(r-1)(n-1)}
\end{array}
\right).
\end{eqnarray}

Let $\lambda$  be a nonzero element of $\mathbb{F}_{q^2}$, then a linear code $\mathcal{C}$ of length $n$ over
$\mathbb{F}_{q^2}$ is said to be $\lambda$-constacyclic if $(\lambda c_n,c_1,\ldots,c_{n-1})\in \mathcal{C}$
for every $(c_1, c_2,\ldots,c_{n})\in \mathcal{C}$. If $\lambda=1$, $\mathcal{C}$ is a cyclic code.
If $\lambda=-1$, $\mathcal{C}$ is called a negacyclic code.
We assume that $\gcd(n,q^2)=1$. A codeword $(c_1, c_2,\ldots,c_{n})\in \mathcal{C}$ is identified with its
polynomial representation $c(x)=c_0+c_1x+\cdots+c_{n-1}x^{n-1}$. It is easy to find that
a $\lambda$-constacyclic code $\mathcal{C}$ of length $n$ over $\mathbb{F}_{q^2}$ is an ideal of the quotient ring
$\mathbb{F}_{q^2}[x]/\langle x^n-\lambda\rangle$. It is known that $\mathcal{C}$ is generated by a monic divisor $g(x)$
of $x^n-\lambda$. The polynomial $g(x)$ is called the generator polynomial of the
code $\mathcal{C}$. Let $\lambda\in\mathbb{F}_{q^2}$ be a primitive $r$th root of unity.
Let $\eta$ denote a primitive $rn$th
root of unity (exists in some extension field) such that $\eta^n=\lambda$. Let $\zeta=\eta^r$ be a
primitive $n$th root of unity. It follows from \cite{krishna1990pseudocyclic} that
the roots of $x^n-\lambda$ are $\{\eta\zeta^i=\eta^{1+ri}|0\leq i\leq n-1\}$.
Denote   $\Omega=\{1+ri|0\leq i\leq n-1\}$. Then the
defining set of a $\lambda$-constacyclic code $\mathcal{C}$ with generator polynomial $g(x)$ is
$Z=\{i\in\Omega|g(\eta^i)=0\}$. It is easy to see that the defining set $Z$ is a
union of some $q^2$-cyclotomic cosets modulo $rn$. There exist the following BCH bound
for cyclic codes and the generalized BCH bound for $\lambda$-constacyclic codes.

\newtheorem{example}{Example}

\begin{lemma}[\it{\cite{macwilliams1977theory}, Ch.7}]
\label{BCHBound_cyclic}
Let $\mathcal{C}$ be a cyclic code of length $n$ over $\mathbb{F}_{q^2}$. Let $\alpha\in\mathbb{F}_{q^2}$ be a primitive $n$-th root of unity. Suppose that $\mathcal{C}$ has
generator polynomial $g(x)$ such that  for some integers $b\geq0$ and $\delta\geq1$,
$g(\alpha^b)=g(\alpha^{b+1})=\cdots=g(\alpha^{b+\delta-2})=0$, that is, the code has a string of
$\delta-1$ consecutive powers of $\alpha$ as zeros. Then the minimum distance
of $\mathcal{C}$ is at least $\delta$.
\end{lemma}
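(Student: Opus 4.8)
\emph{Proof proposal.} The plan is to argue by contradiction using the nonvanishing of a Vandermonde determinant. Suppose $\mathcal{C}$ contains a nonzero codeword of Hamming weight $w$ with $w\leq\delta-1$, and write it in polynomial form as $c(x)=\sum_{s=1}^{w}c_{i_s}x^{i_s}$, where $0\leq i_1<i_2<\cdots<i_w\leq n-1$ are the positions of the nonzero coordinates and each $c_{i_s}\neq0$. Since $g(x)$ divides $c(x)$ in $\mathbb{F}_{q^2}[x]/\langle x^n-1\rangle$ and $g(\alpha^{b})=g(\alpha^{b+1})=\cdots=g(\alpha^{b+\delta-2})=0$, every $\alpha^{b+j}$ with $0\leq j\leq\delta-2$ is a root of $c(x)$; in particular $c(\alpha^{b+j})=0$ for $0\leq j\leq w-1$.

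Next I would read these $w$ vanishing conditions as a homogeneous linear system $M\mathbf{c}=\mathbf{0}$, where $\mathbf{c}=(c_{i_1},\ldots,c_{i_w})^{\mathrm{T}}\neq\mathbf{0}$ and
\[
M=\left(\alpha^{(b+j)i_s}\right)_{0\leq j\leq w-1,\ 1\leq s\leq w}.
\]
Factoring $\alpha^{b\,i_s}$ out of the $s$-th column gives
\[
\det M=\Big(\prod_{s=1}^{w}\alpha^{b\,i_s}\Big)\,\det\!\left(\alpha^{j\,i_s}\right)_{0\leq j\leq w-1,\ 1\leq s\leq w}=\Big(\prod_{s=1}^{w}\alpha^{b\,i_s}\Big)\prod_{1\leq s<t\leq w}\big(\alpha^{i_t}-\alpha^{i_s}\big),
\]
the last factor being the Vandermonde determinant in the nodes $\alpha^{i_1},\ldots,\alpha^{i_w}$. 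Because $\alpha$ is a primitive $n$-th root of unity and the exponents $i_1,\ldots,i_w$ are distinct integers in $\{0,1,\ldots,n-1\}$, each difference $\alpha^{i_t}-\alpha^{i_s}$ is nonzero, so $\det M\neq0$. Hence $M$ is invertible, forcing $\mathbf{c}=\mathbf{0}$, which contradicts the assumption that the codeword is nonzero. Therefore no nonzero codeword has weight $\leq\delta-1$, i.e. the minimum distance of $\mathcal{C}$ is at least $\delta$.

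There is no genuinely hard step here; the only points requiring care are that the nodes $\alpha^{i_s}$ are truly distinct (which uses that $\alpha$ has multiplicative order exactly $n$) and that divisibility in the quotient ring legitimately yields $c(\alpha^{b+j})=0$. An equivalent and even shorter phrasing would be to observe that any $w\leq\delta-1$ columns of the ``Reed--Solomon type'' parity matrix built from the rows $(\alpha^{(b+j)0},\alpha^{(b+j)1},\ldots,\alpha^{(b+j)(n-1)})$, $0\leq j\leq \delta-2$, form such a Vandermonde matrix and are thus linearly independent, giving $d\geq\delta$ directly; the generalized BCH bound for $\lambda$-constacyclic codes then follows by the same Vandermonde argument applied to the roots $\eta\zeta^i$.
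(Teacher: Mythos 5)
Your argument is correct: it is the standard Vandermonde-determinant proof of the BCH bound, and it is essentially the proof given in the reference the paper cites (MacWilliams--Sloane, Ch.~7); the paper itself states this lemma as a known result and supplies no proof of its own. The only phrasing to tighten is ``$g(x)$ divides $c(x)$ in $\mathbb{F}_{q^2}[x]/\langle x^n-1\rangle$'': since $g(x)\mid x^n-1$, divisibility already holds in $\mathbb{F}_{q^2}[x]$ for the degree-$<n$ representative, which is what legitimately yields $c(\alpha^{b+j})=0$.
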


\begin{lemma}[\it{\cite{krishna1990pseudocyclic}, Lemma 4}]
\label{BCHBound_constacyclic}
Let $\mathcal{C}$ be a
$\lambda$-constacyclic code of length $n$ over $\mathbb{F}_{q^2}$, where $\lambda\in\mathbb{F}_{q^2}$ is a primitive $r$th root of unity. Suppose that the generator polynomial $g(x)$
of $\mathcal{C}$ has the elements $\{\eta^{1+ri}|i_0\leq i\leq i_0+d-2\}$ as roots,
where $\eta$ is a primitive $rn$th root of unity, $i_0$ is an integer.
Then the minimum distance of $\mathcal{C}$ is at least $d$.
\end{lemma}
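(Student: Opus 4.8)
The plan is to establish the bound by the classical Vandermonde argument used to prove the BCH bound (Lemma~\ref{BCHBound_cyclic}), adapted to the constacyclic indexing. Suppose, for contradiction, that $\mathcal{C}$ contains a nonzero codeword $c(x)=\sum_{j=0}^{n-1}c_jx^j$ of Hamming weight $w$ with $w\leq d-1$, and let its support be $\{k_1,k_2,\ldots,k_w\}\subseteq\{0,1,\ldots,n-1\}$ with the $k_\ell$ pairwise distinct. Put $\zeta=\eta^r$; since $\eta$ is a primitive $rn$th root of unity, $\zeta$ is a primitive $n$th root of unity, and the prescribed roots of $g(x)$ can be rewritten as $\eta^{1+ri}=\eta\zeta^{\,i}$ for $i_0\leq i\leq i_0+d-2$.

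Since $c(x)$ vanishes at each of these roots, in particular it vanishes at $\eta\zeta^{\,i}$ for $i=i_0,i_0+1,\ldots,i_0+w-1$ (this uses $w-1\leq d-2$). Writing these $w$ conditions out yields the homogeneous linear system
\[
\sum_{\ell=1}^{w}\bigl(c_{k_\ell}\eta^{k_\ell}\bigr)\,\zeta^{\,i k_\ell}=0,\qquad i=i_0,\ldots,i_0+w-1,
\]
in the $w$ quantities $c_{k_\ell}\eta^{k_\ell}$. The coefficient matrix is $\bigl(\zeta^{\,i k_\ell}\bigr)$ with row index $i$ and column index $\ell$; factoring $\zeta^{\,i_0 k_\ell}$ out of the $\ell$-th column leaves the $w\times w$ Vandermonde matrix $\bigl(\zeta^{\,(i-i_0)k_\ell}\bigr)$ with nodes $\zeta^{k_1},\ldots,\zeta^{k_w}$. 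These nodes are pairwise distinct, because $\zeta$ is a primitive $n$th root of unity and the $k_\ell$ are distinct residues modulo $n$, so the determinant is nonzero. Hence every $c_{k_\ell}\eta^{k_\ell}=0$, forcing $c_{k_\ell}=0$ for all $\ell$, which contradicts $c(x)\neq0$. Therefore the minimum distance of $\mathcal{C}$ is at least $d$.

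The only genuinely delicate points are bookkeeping rather than substance: $\eta$, and hence $\zeta$, may lie in a proper extension of $\mathbb{F}_{q^2}$, so the linear-algebra step should be carried out over that extension field, where the Hamming weight of a codeword is unchanged; and one must observe that the $d-1$ exponents $i_0,\ldots,i_0+d-2$ are permitted to fall outside $\{0,\ldots,n-1\}$, which is harmless because $\zeta^n=1$ makes all relevant indices well defined modulo $n$ and keeps the nodes $\zeta^{k_\ell}$ distinct. I do not expect a real obstacle here; this is essentially the Krishna--Sarwate generalization of the BCH bound, and the main work is simply verifying that the Vandermonde nodes remain distinct in the constacyclic setting.
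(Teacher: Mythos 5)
The paper does not prove this lemma; it imports it verbatim from Krishna and Sarwate \cite{krishna1990pseudocyclic}, so there is no internal proof to compare against. Your Vandermonde argument is correct and is essentially the standard proof of the constacyclic BCH bound given in that reference: rewriting the prescribed roots as $\eta^{1+ri}=\eta\zeta^{i}$, absorbing $\eta^{k_\ell}$ into the unknowns and factoring $\zeta^{i_0 k_\ell}$ out of the columns reduces the system to a Vandermonde matrix with distinct nodes $\zeta^{k_1},\ldots,\zeta^{k_w}$, and your two bookkeeping caveats (working in the splitting field of $x^{rn}-1$, and reading exponents modulo $n$ via $\zeta^{n}=1$) are exactly the points that need care and are handled correctly.
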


The following lemma gives a sufficient and necessary
condition for a $q^2$-ary $\lambda$-constacyclic code to be Hermitian dual-containing.

\begin{lemma}[\it{\cite{kai2013constacyclic}, Lemma 2.2}]
\label{Hermitian_dual_coset}
Let $\mathcal{C}$ be a  $\lambda$-constacyclic code of length $n$
over $\mathbb{F}_{q^2}$ with defining set $Z$ and
let $\lambda\in\mathbb{F}_{q^2}$ be a primitive $r$th root of unity. Then $\mathcal{C}$ is a
Hermitian dual-containing code if and only if $Z\cap Z^{-q}=\emptyset$ where
$Z^{-q} = \{-qz \pmod{rn} | z\in Z\}$.
\end{lemma}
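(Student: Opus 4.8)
The plan is to pin down the defining set of the Hermitian dual $\mathcal{C}^{\bot_h}$ explicitly as a subset of $\Omega$, and then to recognize the dual-containing condition as a divisibility of generator polynomials in the ring $R=\mathbb{F}_{q^2}[x]/\langle x^n-\lambda\rangle$. Write $g(x)$ for the generator polynomial of $\mathcal{C}$, with root set $\{\eta^i\mid i\in Z\}$, where $Z\subseteq\Omega$ is a union of $q^2$-cyclotomic cosets modulo $rn$ of size $n-k$, and let $h(x)=(x^n-\lambda)/g(x)$ be the check polynomial, with root set $\{\eta^i\mid i\in\Omega\setminus Z\}$. I work in the case $\lambda^{q+1}=1$, i.e. $r\mid q+1$ — precisely the range in which $Z^{-q}\subseteq\Omega$ and in which $\mathcal{C}^{\bot_h}$ again lies in $R$; note $\gcd(q,n)=1$ by hypothesis, and $\gcd(q,r)=1$ since $r\mid q+1$ and $\gcd(q,q+1)=1$, so $\gcd(q,rn)=1$.

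First I would invoke two standard reductions. The Euclidean dual $\mathcal{C}^{\bot_e}$ of a $\lambda$-constacyclic code is the $\lambda^{-1}$-constacyclic code generated by the (normalized) reciprocal polynomial of $h(x)$, hence has root set $\{\eta^{-i}\mid i\in\Omega\setminus Z\}$. Next, writing $\langle\mathbf{x},\mathbf{y}\rangle_e=\sum_i x_iy_i$ for the ordinary bilinear form, one has $\langle\mathbf{x},\mathbf{y}\rangle_h=0\Leftrightarrow\langle\mathbf{x},\overline{\mathbf{y}}\rangle_e=0$ (apply the $q$th power map to $\langle\mathbf{x},\mathbf{y}\rangle_h=\sum_i x_i^qy_i$ and use $x_i^{q^2}=x_i$), whence $\mathcal{C}^{\bot_h}=\overline{\mathcal{C}^{\bot_e}}$, the coordinatewise $q$th-power image of $\mathcal{C}^{\bot_e}$; applying the Frobenius to a polynomial raises each of its roots to the $q$th power, so $\mathcal{C}^{\bot_h}$ has root set $\{\eta^{-qi}\mid i\in\Omega\setminus Z\}$.

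The next step turns this into a defining set inside $\Omega$. Since $\gcd(q,rn)=1$, the map $\phi\colon i\mapsto -qi\bmod rn$ is a bijection of $\mathbb{Z}_{rn}$, and because $-q\equiv1\pmod r$ it restricts to a bijection $\Omega\to\Omega$. Hence $(\eta^{-qi})^n=\lambda^{-qi}=\lambda^{-q}=\lambda$ for $i\in\Omega$, so $\mathcal{C}^{\bot_h}$ is $\lambda$-constacyclic of dimension $n-k$ with defining set
\[
Z^{\bot_h}=\phi(\Omega\setminus Z)=\phi(\Omega)\setminus\phi(Z)=\Omega\setminus Z^{-q}.
\]

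Finally, $\mathcal{C}$ and $\mathcal{C}^{\bot_h}$ are ideals of $R$ whose generator polynomials have root sets $\{\eta^i\mid i\in Z\}$ and $\{\eta^i\mid i\in Z^{\bot_h}\}$, so $\mathcal{C}^{\bot_h}\subseteq\mathcal{C}$ holds if and only if the generator polynomial of $\mathcal{C}$ divides that of $\mathcal{C}^{\bot_h}$, i.e. $Z\subseteq Z^{\bot_h}=\Omega\setminus Z^{-q}$, which is exactly $Z\cap Z^{-q}=\emptyset$; both directions of the lemma come out together. The delicate point is the composition in the second step: one must carefully track how Euclidean duality (reverse exponents, invert $\lambda$) and Hermitian conjugation (scale exponents by $q$, raise $\lambda$ to the $q$) interact, and confirm that the outcome is again $\lambda$-constacyclic so that the divisibility criterion applies verbatim. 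The remainder is routine cyclotomic-coset bookkeeping.
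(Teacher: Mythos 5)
The paper does not prove this lemma; it is imported verbatim from \cite{kai2013constacyclic} as a known result, so there is no in-paper argument to compare against. Your proof is correct and is the standard derivation: identify $\mathcal{C}^{\perp_h}$ as the Frobenius image of the Euclidean dual, compute its defining set as $\Omega\setminus Z^{-q}$ via the bijection $i\mapsto -qi$ of $\Omega$, and translate containment of ideals into containment of defining sets. Your explicit restriction to $r\mid q+1$ is not a gap but a necessary hypothesis left implicit in the paper's statement (it is what makes $Z^{-q}\subseteq\Omega$ and keeps $\mathcal{C}^{\perp_h}$ a $\lambda$-constacyclic code), and it holds in every application the paper makes of the lemma ($r=1$, $r=2$ with $q$ odd, and $r=t\mid q+1$).
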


An $[[n,k,d;c]]_q$ EAQECC encodes $k$ information qudits
into $n$ channel qudits with the help of $c$ pairs of maximally
 entangled states.  The minimum distance is $d$.
One
of the focuses of the construction of EAQECCs is to determine  the number of maximally entangled pairs required for the encoding.
For example, the optimal number of  entangled pairs required by an arbitrary binary EAQECC is given in \cite{wilde2008optimal}.

\begin{theorem}[\it{\cite{wilde2008optimal}, Theorem 1}]
\label{ARBEAQECC}
Suppose that an EAQECC is constructed from generators corresponding to the rows in a
quantum check matrix
\[
H=[H_Z|H_X],
\]
where $H$ is an $[(n-k)\times2n]$-dimensional binary matrix representing
the quantum code (see \cite{gottesman1997stabilizer,nielsen2000quantum}), and both $H_Z$ and $H_X$ are
$[(n-k)\times n]$-dimensional binary matrices. Then the resulting
code is an $[n,k+c;c]$ entanglement-assisted code and requires
$c$ ebits, where
\begin{equation}
\label{binaryebits}
c = rank(H_XH_Z^T+H_ZH_X^T)/2
\end{equation}
and addition is binary.
\end{theorem}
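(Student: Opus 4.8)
The plan is to pass from the binary check matrix to the $n$-qubit Pauli group and then bring the (possibly non-abelian) group generated by the rows of $H$ into a standard form by a symplectic Gram--Schmidt reduction, following the entanglement-assisted stabilizer construction of \cite{brun2006correcting}. First I would use the standard dictionary: each row $(\mathbf{z}_i\mid\mathbf{x}_i)$ of $H=[H_Z\mid H_X]$ corresponds to a Pauli operator $g_i$ on $n$ qubits, and $g_i$ commutes with $g_j$ exactly when the symplectic product $\mathbf{z}_i\cdot\mathbf{x}_j+\mathbf{x}_i\cdot\mathbf{z}_j$ vanishes modulo $2$. Assembling these products over all pairs $(i,j)$ gives precisely $S:=H_XH_Z^{T}+H_ZH_X^{T}\pmod 2$, the Gram matrix of the generators with respect to the symplectic form. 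Since $S^{T}=S$ and the $i$-th diagonal entry equals $2\sum_{\ell}(H_X)_{i\ell}(H_Z)_{i\ell}\equiv 0\pmod 2$, the matrix $S$ is a symmetric $\mathbb{F}_2$-matrix with vanishing diagonal, i.e.\ the Gram matrix of an alternating form. Hence $S$ is congruent over $\mathbb{F}_2$ to a direct sum of $c$ hyperbolic blocks $\bigl(\begin{smallmatrix}0&1\\1&0\end{smallmatrix}\bigr)$ and a zero block, so $\mathrm{rank}(S)=2c$ is automatically even and $c$ is well defined.

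The second step is to realize this congruence by an invertible $P$ over $\mathbb{F}_2$ acting on the rows: the rows of $PH$ generate the same subgroup $\mathcal{S}=\langle g_1,\dots,g_{n-k}\rangle$ of the Pauli group, so replacing $H$ by $PH$ changes neither the encoded subspace nor any code parameter, while it puts the generators into canonical form --- $n-k-2c$ mutually commuting operators together with $c$ ``symplectic pairs'' $(\bar{g}_j,\tilde{g}_j)$, the two members of a pair anticommuting with each other and commuting with everything else. This is exactly Witt's decomposition of the alternating form $S$ carried out by elementary symplectic row operations; the only point needing care over $\mathbb{F}_2$ is that the vanishing diagonal of $S$ rules out any ``rescaling'' step and lets the paired generators be normalized to anticommute exactly.

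Finally I would apply the entanglement-assisted construction in this canonical form: for each of the $c$ symplectic pairs, adjoin to the sender's register the receiver's half of one maximally entangled pair and extend $\bar{g}_j,\tilde{g}_j$ by suitable Paulis on that extra qubit so they commute on the joint $(n+c)$-qubit system, while the $n-k-2c$ commuting generators are extended trivially. The resulting group is abelian, hence a bona fide stabilizer group on $n+c$ qubits with $n-k$ independent generators, which encodes $(n+c)-(n-k)=k+c$ qubits in all --- $c$ of them pinned down by the pre-shared entanglement --- and uses exactly $c$ ebits; the minimum distance is inherited from the underlying classical code in the usual way. Optimality of this $c$ follows because any other ebit-assisted realization with $c'$ pairs must, for its extended generators to commute, have the symplectic Gram matrix of the appended Pauli factors equal to $S$, and that Gram matrix lives in a $2c'$-dimensional symplectic space, hence has rank at most $2c'$; thus $2c'\geq\mathrm{rank}(S)=2c$. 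This gives $c=\mathrm{rank}(H_XH_Z^{T}+H_ZH_X^{T})/2$, as asserted.

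\medskip\noindent The step I expect to be the main obstacle is the second one: verifying that the symplectic Gram--Schmidt reduction can always be implemented by $\mathbb{F}_2$-row operations on $H$ that amount to legitimate regroupings of the stabilizer generators (so that the code really is unchanged), and dealing with the low-characteristic subtleties of the alternating-form normal form. The remaining ingredients --- the symmetry and zero diagonal of $S$ and hence the parity of its rank, the ebit and dimension count in canonical form, and distance inheritance --- are routine.
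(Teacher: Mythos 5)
The paper states this theorem as a quoted result from \cite{wilde2008optimal} and offers no proof of its own, so there is no internal argument to compare your attempt against. Your sketch correctly reproduces the standard proof from that reference (and from the Brun--Devetak--Hsieh construction it builds on): the symplectic Gram matrix $S=H_XH_Z^{T}+H_ZH_X^{T}$ is symmetric with zero diagonal over $\mathbb{F}_2$, hence alternating of even rank $2c$; symplectic Gram--Schmidt regroups the generators by invertible row operations (which leave the generated group, and hence the code, unchanged) into $c$ anticommuting pairs plus mutually commuting generators; each pair is repaired by one pre-shared ebit, giving $k+c$ encoded qubits, and the rank invariance of $S$ yields minimality of $c$.
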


Several formulas for different EAQECCs are given as corollaries in  \cite{wilde2008optimal}.
Similar results are also
available for nonbinary EAQECCs.
According to \cite{wilde2008optimal},
a formula similar to (\ref{binaryebits}) holds for $q$-ary EAQECCs by using $q$-dimensional entangled pairs. The number of the corresponding entangled pairs is given by
\begin{equation}
\label{edits}
c = rank(H_XH_Z^T-H_ZH_X^T)/2
\end{equation}
and subtraction is in the finite field $\mathbb{F}_q$. There are the following corollaries for   general EAQECCs.

\begin{corollary}[\it{\cite{wilde2008optimal}}]
\label{corollary_EAQECC}
Let $H$ be the parity check matrix of an $[n,k,d]_{q^2}$
 classical linear  code  over $\mathbb{F}_{q^2}$. Then an $[[n,2k-n+c,d;c]]_q$ EAQECC can be obtained, where
$c=rank(HH^\dagger)$ is the number of maximally entangled states required.
\end{corollary}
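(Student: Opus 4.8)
The plan is to realize the code through the Hermitian construction and then read off its parameters from the $q$-ary analogue of Theorem~\ref{ARBEAQECC}. First I would fix an $\mathbb{F}_q$-basis $\{1,\omega\}$ of $\mathbb{F}_{q^2}$ and use it to define the usual $\mathbb{F}_q$-linear bijection $\psi\colon\mathbb{F}_{q^2}^{\,n}\to\mathbb{F}_q^{\,2n}$, writing $v=a+\omega b$ with $a,b\in\mathbb{F}_q^{\,n}$ and setting $\psi(v)=(a\,|\,b)$. Applying $\psi$ to each of the $n-k$ rows $h_i$ of $H$ together with its scalar multiple $\omega h_i$ yields a $2(n-k)\times 2n$ matrix $\widetilde H=[\widetilde H_X\,|\,\widetilde H_Z]$ over $\mathbb{F}_q$, which I take as the quantum check matrix. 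Since the resulting EAQECC is then built from $2(n-k)$ generators, the $q$-ary form of Theorem~\ref{ARBEAQECC} gives an $[[n,\,n-2(n-k)+c,\,d;\,c]]_q=[[n,\,2k-n+c,\,d;\,c]]_q$ code, once we know that it requires exactly $c$ ebits.

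The core of the argument is therefore to evaluate $c$ via formula~(\ref{edits}). The key identity is that for $v,w\in\mathbb{F}_{q^2}^{\,n}$ the symplectic pairing of the $\psi$-images recovers the trace $\mathrm{Tr}_{\mathbb{F}_{q^2}/\mathbb{F}_q}$ of the Hermitian pairing (up to a fixed nonzero scalar); consequently $\langle v,w\rangle_h=0$ holds if and only if $\psi(v)$ is symplectically orthogonal to $\psi(w)$ \emph{and} $\psi(\omega v)$ is symplectically orthogonal to $\psi(w)$. Feeding the rows of $\widetilde H$ into this, the $\mathbb{F}_q$-bilinear form represented by $\widetilde H_X\widetilde H_Z^{T}-\widetilde H_Z\widetilde H_X^{T}$ is exactly the trace form of the Hermitian Gram matrix of the rows of $\overline H$, that is, of $HH^\dagger$; since a Hermitian form over $\mathbb{F}_{q^2}$ of rank $\rho$ has trace form of $\mathbb{F}_q$-rank $2\rho$, we get $\mathrm{rank}_{\mathbb{F}_q}(\widetilde H_X\widetilde H_Z^{T}-\widetilde H_Z\widetilde H_X^{T})=2\,\mathrm{rank}_{\mathbb{F}_{q^2}}(HH^\dagger)$. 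By~(\ref{edits}) this yields $c=\mathrm{rank}(HH^\dagger)$. (When $HH^\dagger=0$ one gets $c=0$, recovering the ordinary Hermitian construction and the condition of Lemma~\ref{Hermitian_dual_coset}.)

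For the minimum distance I would run the standard argument: an undetectable error of weight $w$ maps under $\psi^{-1}$ to a nonzero codeword of weight $w$ in the $[n,k,d]_{q^2}$ code $\mathcal{C}$ with parity-check matrix $H$, so $w\ge d$, whence the EAQECC has minimum distance at least $d$; in the applications where $\mathcal{C}$ is MDS the entanglement-assisted Singleton bound of \cite{brun2006correcting} forces equality. The step I expect to be the main obstacle is the rank identity in the second paragraph: one must track the Hermitian-to-symplectic dictionary carefully enough to be certain the obstruction form has $\mathbb{F}_q$-rank \emph{exactly} $2\,\mathrm{rank}(HH^\dagger)$ — neither more nor less — and confirm that the doubling of generators is correctly cancelled against the factor $\tfrac12$ in~(\ref{edits}) and matched against the dimension count of Theorem~\ref{ARBEAQECC}. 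The remaining steps are routine, and the argument is uniform in the parity of $q$, since the rank of an alternating form is always even.
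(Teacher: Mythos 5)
The paper never proves this corollary: it is quoted directly from \cite{wilde2008optimal} (together with the $q$-ary rank formula (\ref{edits})), so there is no internal proof to measure yours against --- the comparison has to be with the standard argument in the literature, which is exactly the route you take. Judged on its own terms your sketch is essentially sound: expanding the $n-k$ rows $h_i$ of $H$ into the $2(n-k)$ rows $\psi(h_i),\psi(\omega h_i)$, feeding them into the $q$-ary form of Theorem~\ref{ARBEAQECC}, and doing the bookkeeping $n-2(n-k)+c=2k-n+c$ is correct, and the crux you isolate --- that the symplectic obstruction matrix has $\mathbb{F}_q$-rank exactly $2\,\mathrm{rank}_{\mathbb{F}_{q^2}}(HH^\dagger)$ --- is a true statement. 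It can be closed as follows: $HH^\dagger$ is a Hermitian matrix, hence congruent over $\mathbb{F}_{q^2}$ to $\mathrm{diag}(1,\ldots,1,0,\ldots,0)$ with $\mathrm{rank}(HH^\dagger)$ ones; such a congruence induces an invertible $\mathbb{F}_q$-change of basis on the span of the $2(n-k)$ expanded rows, so it preserves the rank of the alternating form; and each unit diagonal entry contributes a nondegenerate $2\times 2$ alternating block, since for $\langle v,v\rangle_h=1$ the symplectic pairing of $\psi(v)$ with $\psi(\omega v)$ equals $\bigl((\omega-\omega^q)\langle v,v\rangle_h\bigr)/(\omega-\omega^q)=1$. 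Two small corrections to your write-up: with $\psi(a+\omega b)=(a\,|\,b)$ the symplectic pairing of $\psi(u)$ and $\psi(v)$ is $\bigl(\overline{u}\cdot v-u\cdot\overline{v}\bigr)/(\omega-\omega^q)$, i.e.\ the functional $z\mapsto(z-z^q)/(\omega-\omega^q)$ applied to $\langle u,v\rangle_h$; this is a nonzero $\mathbb{F}_q$-linear functional and hence equals $\mathrm{tr}_{\mathbb{F}_{q^2}/\mathbb{F}_q}(\mu z)$ for some $\mu\neq 0$, so your ``trace up to a scalar'' claim survives, but the scalar sits inside the trace rather than outside. Also, your argument only yields minimum distance at least $d$, whereas the corollary asserts $d$ exactly; you correctly note that the EA-Singleton bound forces equality whenever the classical code is MDS, which covers every use of the corollary in this paper, but the statement as quoted is really a ``$\geq d$'' statement in general.
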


\begin{corollary}[\it{EA-Singleton Bound, \cite{brun2006correcting}}]
An $[[n,k,d;c]]_q$ EAQECC satisfies
\begin{equation}
\label{EASingleton}
n+c-k\geq2(d-1),
\end{equation}
where $0\leq c\leq n-1$.
\end{corollary}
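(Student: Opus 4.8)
\emph{Proof sketch.} The plan is to establish the bound by the standard decoupling (information-theoretic) argument for the quantum Singleton bound, with the $c$ pre-shared halves of the maximally entangled pairs playing the role of $c$ extra noiseless qudits that are permanently available to the receiver. Normalize the von Neumann entropy $S(\cdot)$ so that a single maximally mixed qudit has entropy $1$. Pass to the purified picture: feed the $q^{k}$-dimensional logical register in the maximally mixed state, purified by a reference system $R$ with $S(R)=k$; apply the encoding isometry; and let $|\Phi\rangle_{RNB}$ denote the resulting global pure state, where $N=N_{1}\cdots N_{n}$ are the $n$ transmitted qudits and $B$ collects the receiver's $c$ halves of the ebits, which the channel never disturbs. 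Because the code has minimum distance $d$, it corrects the erasure of every $E\subseteq N$ with $|E|=d-1$: the logical data is recovered from $(N\setminus E)B$. By the decoupling characterization of erasure correction this is equivalent to $\rho_{RE}=\rho_{R}\otimes\rho_{E}$, i.e.\ $I(R:E)=0$; this condition constrains only $E$ and $R$ and not the extra system $B$ held by the decoder, since $E$ is precisely the complement of $(N\setminus E)B$ inside the pure state on $RNB$.

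Assume first $2(d-1)\le n$. Choose disjoint $E_{1},E_{2}\subseteq N$ with $|E_{1}|=|E_{2}|=d-1$ and set $S=N\setminus(E_{1}\cup E_{2})$, so $|S|=n-2(d-1)$. Decoupling of $E_{1}$ gives $S(RE_{1})=S(R)+S(E_{1})$, and purity of $|\Phi\rangle_{RNB}$ gives $S(RE_{1})=S(E_{2}SB)$; symmetrically $S(RE_{2})=S(R)+S(E_{2})=S(E_{1}SB)$. Subadditivity yields $S(E_{2}SB)\le S(E_{2})+S(SB)$ and $S(E_{1}SB)\le S(E_{1})+S(SB)$; adding the two resulting inequalities, the terms $S(E_{1})+S(E_{2})$ cancel and one is left with $S(R)\le S(SB)$. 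Since $S(SB)\le|S|+c=n-2(d-1)+c$ and $S(R)=k$, this reads $k\le n-2(d-1)+c$, i.e.\ $n+c-k\ge 2(d-1)$. The side condition $0\le c\le n-1$ is part of the structure of an EAQECC and may simply be invoked.

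The one place that needs separate attention is the degenerate range $2(d-1)>n$, where two disjoint $(d-1)$-subsets no longer fit inside $N$; there the inequality is typically vacuous or reduces to a non-existence statement, and I would treat it by a short separate case analysis --- applying the same entropy inequalities to a single $(d-1)$-subset $E\subseteq N$ and its complement $N\setminus E$ (which is still decoupled from $R$, having fewer than $d-1$ elements), and exploiting the additional facts $S(B)=c$ and $I(R:B)=0$, which hold because the encoding isometry acts only on the $N$-side and leaves the receiver's ebit halves untouched. I expect this boundary range, rather than the main argument, to be the fiddly part of a complete write-up; conceptually everything reduces to repeated use of subadditivity and the Araki--Lieb inequality, exactly as in the entanglement-free proof of the quantum Singleton bound $k\le n-2d+2$.
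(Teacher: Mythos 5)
The paper does not prove this corollary at all --- it is imported verbatim from \cite{brun2006correcting} --- so there is no internal proof to compare against; your argument has to be judged on its own. Your main argument is the standard entropic proof and it is correct in the regime $2(d-1)\le n$: the decoupling condition $I(R:E)=0$ for every $(d-1)$-subset $E\subseteq N$ is the right characterization of erasure correctability (with the receiver's register $B$ correctly riding along with the surviving qudits rather than with $E$), and the chain $S(R)+S(E_1)=S(RE_1)=S(E_2SB)\le S(E_2)+S(SB)$, its mirror image, and the cancellation giving $S(R)\le S(SB)\le n-2(d-1)+c$ is exactly the established route to this bound.

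The one genuine gap is the regime $2(d-1)>n$, which you explicitly defer. This is not just bookkeeping: the single-erasure-set variant you sketch does not close it. Taking one $E$ with $|E|=d-1$ and its complement (both decoupled from $R$ when $n\le 2(d-1)$) and combining $S(R)+S(E)=S((N\setminus E)B)\le S(N\setminus E)+c$ with $S(R)+S(N\setminus E)=S(EB)\le S(E)+c$ yields only $k\le c$, whereas the claimed bound demands the strictly stronger $k\le n-2(d-1)+c$ when $n<2(d-1)$. Closing that case requires a genuinely different idea (this is precisely the regime where published proofs of the EA-Singleton bound have historically been delicate), so a complete write-up cannot just gesture at it. That said, the damage to this paper is nil: every code family constructed here satisfies $2(d-1)\le n$ for the stated parameter ranges (the only exceptions are a couple of trivial $k=0$ instances at very small $q$), so the portion you actually prove is the portion the paper uses.
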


\setcounter{footnote}{0}
\renewcommand{\thefootnote}{\alph{footnote}}
\section{Constructions of $q$-ary EAQMDS codes}
\noindent

 A classical linear MDS code can lead to an EAQECC that meets the
  corresponding EA-Singleton bound \cite{brun2006correcting}. The main task is to
  determine the number of maximally entangled pairs that   required.
For the $q$-ary QMDS code of length $n$, the construction problem has been
completely solved when  length   $n\leq q+1$, see  \cite{rotteler2004quantum, grassl2004optimal}. Therefore, we do not need to consume extra entanglement resources for the construction when  length $n\leq q+1$.
 However, the introduction of a certain amount of pre-shared   entanglement  is useful for the case when length $n>q+1$, since  we may have more variety for the parameters
  of   EAQMDS codes than those of   standard QMDS codes.

\subsection{EAQMDS codes  based on cyclic MDS codes}
\noindent

We take $\mathcal{C}$ as a $q^2$-ary cyclic code over $\mathbb{F}_{q^2}$ of length $n$, where $n|q^2+1$.
Then the $q^2$-cyclotomic coset modulo $n$ containing $i$ is denoted by
$C_i=\{i,iq^2,iq^4,\ldots,iq^{2(m_i-1)}\}$, where $m_i$ is the smallest positive
integer such that $q^{m_i}i=i  \pmod{  n}$. The following result gives the $q^2$-cyclotomic cosets modulo $n$.

 \begin{lemma}[\it{\cite{la2011new}, Lemma 4.1}]
\label{cyclotomic1}
Let $n|q^2+1$  and let $s=\lfloor\frac{n}{2}\rfloor$. If $n$ is odd,
then the $q^2$-cyclotomic cosets modulo $n$
containing integers from 0 to $n$ are: $C_0=\{0\}$,  $C_{i}=\{i,-i\}=\{i,n-i\}$, where $1\leq i\leq s$.
If $n$ is even,
then the $q^2$-cyclotomic cosets modulo $n$
containing integers from 0 to $n$ are: $C_0=\{0\}$, $C_{s}=\{s\}$ and $C_{i}=\{i,-i\}=\{i,n-i\}$, where $1\leq i\leq s-1$.
\end{lemma}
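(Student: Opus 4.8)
The plan is to exploit the single arithmetic fact that $n\mid q^2+1$ forces $q^2\equiv -1\pmod n$, and hence $q^4\equiv 1\pmod n$. First I would observe that for any integer $i$ we have $iq^2\equiv -i\pmod n$ and $iq^4\equiv i\pmod n$, so the $q^2$-cyclotomic coset $C_i=\{i,iq^2,iq^4,\ldots\}$ reduces modulo $n$ to the set $\{i,\,-i\bmod n\}=\{i,\,n-i\}$. In particular the size $m_i$ of $C_i$ is either $1$ or $2$, so every coset is of one of the two advertised shapes; what remains is to decide which cosets are singletons and to check that the listed cosets are distinct and exhaustive.

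Next I would pin down exactly when $m_i=1$. Since $C_i=\{i,n-i\}$, we have $m_i=1$ iff $iq^2\equiv i\pmod n$, i.e. iff $2i\equiv 0\pmod n$. If $n$ is odd then $\gcd(2,n)=1$, so this forces $i\equiv 0\pmod n$; thus $C_0=\{0\}$ is the only singleton coset, and every $C_i$ with $1\le i\le s=\lfloor n/2\rfloor=\frac{n-1}{2}$ is a genuine pair $\{i,n-i\}$. If $n$ is even then $2i\equiv 0\pmod n$ iff $i\equiv 0$ or $i\equiv n/2\pmod n$; since $n/2=s$, the singleton cosets are exactly $C_0=\{0\}$ and $C_s=\{s\}$, while $C_i=\{i,n-i\}$ is a pair for $1\le i\le s-1$.

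It then remains to verify that the listed cosets are pairwise distinct and together cover all residues $\{0,1,\ldots,n-1\}$ (the integers from $0$ to $n$, noting $C_n=C_0$). For distinctness, if $1\le i<j\le s$ then $\{i,n-i\}$ and $\{j,n-j\}$ are disjoint, because equality of any two of the four elements would force either $i=j$ or $i+j=n$, and the latter is impossible since $i+j\le 2s-1<n$ in the odd case and $i+j\le 2(s-1)<n$ in the even case. For coverage I would simply count sizes: in the odd case $|C_0|+\sum_{i=1}^{s}|C_i|=1+2s=n$, and in the even case $|C_0|+|C_s|+\sum_{i=1}^{s-1}|C_i|=1+1+2(s-1)=n$. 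Since the listed cosets are pairwise disjoint and their sizes sum to $n$, they partition $\mathbb{Z}/n\mathbb{Z}$, which is the assertion. I do not expect a serious obstacle here; the only care needed is the parity bookkeeping in the two cases and the elementary disjointness argument above.
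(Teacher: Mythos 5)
Your proof is correct and complete. Note that the paper does not actually prove this statement: it is quoted verbatim from the cited reference (\cite{la2011new}, Lemma 4.1), so there is no in-paper argument to compare against. Your argument is the standard one — the congruence $q^2\equiv-1\pmod n$ collapses every coset to $\{i,n-i\}$, the singleton cosets are exactly those with $2i\equiv0\pmod n$ (giving the parity split), and disjointness plus the size count $1+2s=n$ (resp.\ $2+2(s-1)=n$) yields the partition — and all the steps check out, including the bound $i+j<n$ needed for disjointness and the observation that the residues from $0$ to $n$ reduce to $\{0,1,\ldots,n-1\}$.
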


\begin{lemma}
\label{Rank1}
Let $n|q^2+1$ and $s=\lfloor\frac{n}{2}\rfloor$. 
Let $\mathcal{C}$ be a $q^2$-ary cyclic code of length $n$
with defining set $Z=\cup_{i=0}^{\delta}C_{i}$, where
$1\leq\delta\leq\delta_{\max}=\lfloor\frac{n}{q+1}\rfloor$, and let  $H$ be
the parity check matrix of $\mathcal{C}$ over $\mathbb{F}_{q^2}$,
then $rank(HH^\dag)=1$.
\end{lemma}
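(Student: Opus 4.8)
The plan is to compute $HH^\dag$ directly using the explicit Vandermonde-type parity check matrix of the cyclic code and then count how many of its entries are nonzero. Since $\mathcal{C}$ has defining set $Z=\bigcup_{i=0}^{\delta}C_i=\{0,\pm1,\pm2,\ldots,\pm\delta\}$ (mod $n$) by Lemma \ref{cyclotomic1}, the code has a parity check matrix of Reed--Solomon type, namely the $(2\delta+1)\times n$ matrix $H$ whose rows are indexed by the exponents $j\in\{0,\pm1,\ldots,\pm\delta\}$ and whose entries in the row labelled $j$ are $(\beta^{j\cdot 0},\beta^{j\cdot 1},\ldots,\beta^{j(n-1)})$, where $\beta$ is a primitive $n$th root of unity in $\mathbb{F}_{q^2}$. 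Labelling rows by exponents rather than by $1,\ldots,2\delta+1$ makes the bookkeeping transparent.

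First I would write down the $(j,\ell)$ entry of $HH^\dag$. With rows labelled by exponents $j,\ell\in\{0,\pm1,\ldots,\pm\delta\}$, conjugation being the $q$th power map, one gets
\[
(HH^\dag)_{j,\ell}=\sum_{m=0}^{n-1}\beta^{jm}\,\overline{\beta^{\ell m}}
=\sum_{m=0}^{n-1}\beta^{(j+q\ell)m}.
\]
This geometric sum equals $n\not\equiv 0$ in $\mathbb{F}_{q^2}$ if $j+q\ell\equiv 0\pmod n$, and equals $0$ otherwise. So the next step is purely combinatorial: determine for how many pairs $(j,\ell)$ with $j,\ell\in\{0,\pm1,\ldots,\pm\delta\}$ one has $j\equiv -q\ell\pmod n$. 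Using $n\mid q^2+1$, we have $q^2\equiv -1\pmod n$, hence $-q\ell\equiv q^{-1}\ell\cdot q^2\cdot(-1)$... more directly: $-q\ell \equiv q\ell\cdot q^2 \equiv q^3\ell$, but the cleanest route is to multiply the congruence $j\equiv -q\ell$ by $q$ to get $qj\equiv -q^2\ell\equiv \ell\pmod n$. So the condition is symmetric and amounts to $\ell\equiv qj\pmod n$.

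The heart of the matter is then to show that the only pair $(j,\ell)$ with both coordinates in the small symmetric interval $\{0,\pm1,\ldots,\pm\delta\}$ satisfying $\ell\equiv qj\pmod n$ is $(j,\ell)=(0,0)$. This is where the hypothesis $\delta\le\lfloor n/(q+1)\rfloor$ is used, and it is the main obstacle — everything else is routine. For $j=0$ we need $\ell\equiv 0$, forcing $\ell=0$ since $|\ell|\le\delta<n$. For $j\neq 0$, I would argue that $qj\bmod n$ cannot land in $\{0,\pm1,\ldots,\pm\delta\}$: writing $|j|\le\delta$, the residue $qj\bmod n$ (taken in $(-n/2,n/2]$) has absolute value that one can pin down using $q\cdot\delta_{\max}\le q\lfloor n/(q+1)\rfloor< n/2$ together with the fact that $n\mid q^2+1$ forces $n\le q^2+1$ so $\delta_{\max}\le q$ roughly; a short case analysis on the size of $q|j|$ relative to $n$ shows the reduced residue has magnitude strictly greater than $\delta$. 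Concretely, if $q|j|<n$ then $|qj\bmod n|\in\{q|j|,\,n-q|j|\}$ and both exceed $\delta$ because $q|j|\ge q>\delta$ (as $\delta\le n/(q+1)\le q$) and $n-q|j|\ge n-q\delta\ge n - q\cdot\frac{n}{q+1}=\frac{n}{q+1}\ge\delta$, with at least one inequality strict; larger ranges of $|j|$ are handled similarly by subtracting further multiples of $n$.

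Once that combinatorial claim is established, it follows that $HH^\dag$ is the $(2\delta+1)\times(2\delta+1)$ matrix with the single nonzero entry $n$ in the position $(0,0)$, hence $\mathrm{rank}(HH^\dag)=1$. I would close by remarking that this is consistent with $Z\cap Z^{-q}\neq\emptyset$ (so $\mathcal{C}$ is not Hermitian dual-containing, by Lemma \ref{Hermitian_dual_coset}) — indeed $0\in Z$ and $-q\cdot 0=0\in Z$ — but that the overlap is "as small as possible," which is exactly what makes $c=1$.
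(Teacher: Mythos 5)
Your proof is correct and reaches the same combinatorial core as the paper's, but packages it differently. The paper splits $Z=C_0\cup Z_1$ with $Z_1=\cup_{i=1}^{\delta}C_i$, shows via Lemma \ref{Hermitian_dual_coset} that $Z_1\cap Z_1^{-q}=\emptyset$ and $C_0\cap Z_1^{-q}=\emptyset$, and reads off the block structure $H_1H_1^\dag=0$, $h_0H_1^\dag=0$, $h_0h_0^\dag\neq 0$; you instead compute every entry of the Gram matrix $HH^\dag$ directly from the Vandermonde rows and show the unique nonzero entry sits at $(0,0)$. Your version is more self-contained (it does not lean on the dual-containment criterion or on the general ``disjoint defining sets imply orthogonal check blocks'' fact), at the cost of having to do the modular arithmetic by hand; either way the heart is the same statement, that $j+q\ell\equiv 0\pmod n$ has no solution with $j,\ell\in\{0,\pm1,\dots,\pm\delta\}$ other than $(0,0)$. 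One point you should nail down rather than leave as ``at least one inequality strict'': the chain $n-q|j|\geq n-q\delta\geq \tfrac{n}{q+1}\geq\delta$ collapses to equality only if $\delta=\tfrac{n}{q+1}\in\mathbb{Z}$, and $(q+1)\mid n$ is impossible here because $n\mid q^2+1$ and $q^2+1\equiv 2\pmod{q+1}$ would force $q+1\leq 2$; once you say this, $n-q|j|>\delta$ is genuinely strict and the off-diagonal entries all vanish. (The paper's own proof glosses over the same boundary case when it asserts $q+1\leq i+qj<n$, so you are in good company, but it is worth the one extra line.)
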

\begin{proof}
We divide the defining set $Z$ of $\mathcal{C}$ into two mutually disjoint subsets, i.e.,  $Z=C_0\cup Z_1$, where $Z_1=\cup_{i=1}^{\delta}C_{i}$. Let $\mathcal{C}_1$ be  a $q^2$-ary cyclic
code of length $n$ with defining set $Z_1$. 
We show $\mathcal{C}_1^{\perp_h}\subseteq \mathcal{C}_1$. 
Suppose that  $\mathcal{C}_1$ is not a Hermitian dual-containing code,
then $Z_1\cap Z_1^{-q}\neq\emptyset$ by Lemma \ref{Hermitian_dual_coset}. There exist $i$ and $j$,
where $1\leq i,j\leq\delta_{\max}$, such that $i=-qj\pmod{n}$ or $i=qj\pmod{n}$.
If the first case holds, it follows that $q+1\leq i+qj<n$, which is a contradiction. If the second
case holds, it follows that $1\leq i\leq\delta_{\max}< q\leq qj\leq q\delta_{\max}<n$, which is
also a contradiction. Therefore, we have $\mathcal{C}_1^{\perp_h}\subseteq \mathcal{C}_1$.
Let the parity check matrix of $\mathcal{C}_1$
over $\mathbb{F}_{q^2}$ be $H_1$, then $H_1H_1^\dag=0$. It is easy to see that
the parity check matrix of $\mathcal{C}$ over $\mathbb{F}_{q^2}$ is given by
$H=\left(\begin{array}{c}h_0\\H_1\end{array}\right)$, where $h_0=(1,1,\ldots,1)$. Since $n|q^2+1$,
then we have $h_0h_0^\dag\neq0$. It is obvious that $C_0\cap Z_1^{-q}=\emptyset$,
and it follows that $h_0H_1^\dag=0$. Therefore,  the rank of
$ HH^\dag $ is equal to $1$.
\qed
\end{proof}

\begin{theorem}
Let $n|q^2+1$. There exists an EAQMDS code  with parameters
\[
[[n,n-2d+3,d;1]]_q,
\]
where  $2\leq d\leq2\lfloor\frac{n}{q+1}\rfloor+2$ is an even integer. 
\end{theorem}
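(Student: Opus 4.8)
The plan is to exhibit the claimed code as the entanglement-assisted quantum code attached, through Corollary~\ref{corollary_EAQECC}, to a carefully chosen cyclic MDS code over $\mathbb{F}_{q^2}$, and then to read off $c=1$ from Lemma~\ref{Rank1}. Since $d$ is required to be even, I would write $d=2\delta+2$, so that the range $2\le d\le 2\lfloor n/(q+1)\rfloor+2$ becomes $0\le\delta\le\delta_{\max}:=\lfloor n/(q+1)\rfloor$. Let $\mathcal{C}$ be the $q^2$-ary cyclic code of length $n$ with defining set $Z=\bigcup_{i=0}^{\delta}C_i$. Because $\delta\le\delta_{\max}<\lfloor n/2\rfloor$ (here one uses $q\ge2$ together with $n\mid q^2+1$, which rules out $n=3$), Lemma~\ref{cyclotomic1} gives $C_0=\{0\}$ and $C_i=\{i,n-i\}$ for $1\le i\le\delta$, these cosets being pairwise disjoint; hence $|Z|=2\delta+1$ and $\dim\mathcal{C}=n-2\delta-1=n-d+1$. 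Read modulo $n$, the set $Z$ is the interval of $2\delta+1$ consecutive residues $\{n-\delta,\dots,n-1,0,1,\dots,\delta\}$, so the generator polynomial of $\mathcal{C}$ vanishes at $2\delta+1$ consecutive powers of a primitive $n$-th root of unity; by the BCH bound (Lemma~\ref{BCHBound_cyclic}) its minimum distance is at least $2\delta+2=d$, and the Singleton bound gives $d(\mathcal{C})\le n-\dim\mathcal{C}+1=d$. Thus $\mathcal{C}$ is an $[n,n-d+1,d]_{q^2}$ MDS code.

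I would then feed $\mathcal{C}$ into Corollary~\ref{corollary_EAQECC}: writing $H$ for its parity check matrix over $\mathbb{F}_{q^2}$, one obtains an $[[n,2\dim\mathcal{C}-n+c,d;c]]_q$ EAQECC with $c=\mathrm{rank}(HH^{\dagger})$. For $1\le\delta\le\delta_{\max}$ this is exactly the hypothesis of Lemma~\ref{Rank1}, which yields $c=1$; for the single remaining value $\delta=0$ one has $Z=\{0\}$, $H=(1,1,\dots,1)$, and $HH^{\dagger}=n\neq0$ in $\mathbb{F}_{q^2}$ since $\gcd(n,q)=1$ (because $n\mid q^2+1$), so again $c=\mathrm{rank}(HH^{\dagger})=1$. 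Substituting $\dim\mathcal{C}=n-d+1$ and $c=1$ produces $[[n,2(n-d+1)-n+1,d;1]]_q=[[n,n-2d+3,d;1]]_q$. It remains to check optimality: the EA-Singleton bound~(\ref{EASingleton}) reads $n+c-k\ge2(d-1)$, and here $n+1-(n-2d+3)=2(d-1)$, so equality holds and the code is an EAQMDS code (note also $c=1\le n-1$).

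Since the only delicate ingredient, the rank computation $\mathrm{rank}(HH^{\dagger})=1$, is already handed to us by Lemma~\ref{Rank1}, the argument is largely bookkeeping. The points I would be most careful about are: matching the parameter range, as "$d$ even with $2\le d\le2\delta_{\max}+2$" gives $0\le\delta\le\delta_{\max}$, whereas Lemma~\ref{Rank1} covers only $1\le\delta\le\delta_{\max}$, so the boundary case $d=2$ (i.e.\ $\delta=0$) must be disposed of by hand as above; and confirming that $Z$ is genuinely an interval of consecutive residues, which is what upgrades the BCH bound from a lower bound to an equality and forces $\mathcal{C}$ to be MDS. I do not expect any obstacle beyond these routine verifications.
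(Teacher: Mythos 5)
Your proposal is correct and follows essentially the same route as the paper: the same defining set $Z=\bigcup_{i=0}^{\delta}C_i$, the BCH bound via Lemma~\ref{BCHBound_cyclic}, the rank computation of Lemma~\ref{Rank1}, and Corollary~\ref{corollary_EAQECC} together with the EA-Singleton bound. The only difference is that you explicitly handle the boundary case $\delta=0$ (i.e.\ $d=2$), which falls outside the stated hypothesis of Lemma~\ref{Rank1}; the paper glosses over this, so your extra check is a small but genuine improvement rather than a deviation.
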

\begin{proof}
Let $\mathcal{C}$ be a cyclic code of length $n$ with defining set $Z=\cup_{i=0}^{\delta}C_{i}$,
where $0\leq\delta\leq\delta_{\max}=\lfloor\frac{n}{q+1}\rfloor$. From Lemma \ref{cyclotomic1}, we know
that the defining set $Z$ consists of $2\delta+1$
consecutive integers $\{-\delta,-\delta+1,\ldots,-1,0,1,\ldots,\delta-1,\delta\}$.
Then the dimension of $\mathcal{C}$ is dim $\mathcal{C}=n-2\delta-1$. From the BCH bound for cyclic codes in Lemma \ref{BCHBound_cyclic}, we know that the minimum distance
of $\mathcal{C}$ is at least $2\delta+2$. Then $\mathcal{C}$ has parameters  $[n,n-2\delta-1,\geq2\delta+2]_{q^2}$.
Combining Corollary \ref{corollary_EAQECC}, Lemma \ref{Rank1} and the EA-Singleton
bound, we can obtain  an   EAQMDS code with parameters
$[[n,n-4\delta-1,2\delta+2;1]]_q$. Let $d=2\delta+2$, then we have $2\leq d\leq2\delta_{\max}+2=
2\lfloor\frac{n}{q+1}\rfloor+2$. \qed
\end{proof}

Let $n=q^2+1$, then we can get the following EAQMDS code  with minimum distance greater than $q+1$.

\begin{corollary}
\label{EAQMDS11}
There exists an EAQMDS code with parameters
\[
[[q^2+1,q^2-2d+4,d;1]]_q,
\]
where $q$ is a prime power, $2\leq d\leq 2q$ is an even integer.
\end{corollary}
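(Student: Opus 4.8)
The plan is to specialize the preceding theorem to the case $n = q^2+1$. First I would verify that $n = q^2+1$ indeed satisfies the divisibility hypothesis $n \mid q^2+1$ of the theorem, which is trivially true. Then I would compute the bound $\delta_{\max} = \lfloor \frac{n}{q+1} \rfloor = \lfloor \frac{q^2+1}{q+1} \rfloor$. Since $q^2 + 1 = (q+1)(q-1) + 2$, we have $\lfloor \frac{q^2+1}{q+1} \rfloor = q-1$ (because $2 < q+1$ for all prime powers $q \geq 2$). Hence the range of even $d$ from the theorem, namely $2 \leq d \leq 2\lfloor \frac{n}{q+1}\rfloor + 2$, becomes $2 \leq d \leq 2(q-1)+2 = 2q$.

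Next I would substitute $n = q^2+1$ into the parameters $[[n, n-2d+3, d; 1]]_q$ from the theorem to obtain $[[q^2+1, q^2+1-2d+3, d; 1]]_q = [[q^2+1, q^2-2d+4, d; 1]]_q$, which is exactly the claimed form. The only mild subtlety is whether $d$ must actually attain even values throughout $2 \leq d \leq 2q$: since $2q$ is even and all values $d = 2\delta + 2$ for $0 \leq \delta \leq q-1$ are covered, the even-$d$ restriction is consistent with the stated range.

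I do not anticipate any real obstacle here, as this is a direct corollary: the main theorem already does all the structural work (building a cyclic MDS code with the right defining set, invoking Lemma \ref{Rank1} to get $\mathrm{rank}(HH^\dag) = 1$, and applying Corollary \ref{corollary_EAQECC} together with the EA-Singleton bound to conclude the EAQMDS property). The only content of the corollary's proof is the arithmetic computation of $\lfloor \frac{q^2+1}{q+1} \rfloor = q-1$ and the substitution into the formulas. One could additionally remark that these codes have minimum distance up to $2q$, which exceeds $q+1$ for $q \geq 2$, justifying the paper's headline claim, but that observation is already stated in the surrounding text and need not be reproved. Thus the proof is simply: apply the theorem with $n = q^2+1$, note $\delta_{\max} = q-1$, and read off the parameters.
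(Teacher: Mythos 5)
Your proposal is correct and matches the paper's (implicit) argument exactly: the corollary is obtained by specializing the preceding theorem to $n=q^2+1$, computing $\lfloor\frac{q^2+1}{q+1}\rfloor=q-1$ so that the distance range becomes $2\leq d\leq 2q$, and substituting to get $q^2+1-2d+3=q^2-2d+4$. The arithmetic is right and nothing further is needed.
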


\begin{example}
Let $q=4$, then $n=q^2+1=17$. Applying Corollary \ref{EAQMDS11}, we get two EAQMDS codes with minimum
distance greater than $q+1=5$ whose parameters are $[[17,8,6;1]]_4$, $[[17,4,8;1]]_4$.
\end{example}

If we consider  cyclic codes whose lengths satisfy  $n|q^2-1$,
then the corresponding $q^2$-cyclotomic coset modulo $n$ containing $i$ is
$C_i=\{i\}$, $0\leq i\leq n-1$.

\begin{lemma}
\label{Rank2}
Let  $n|q^2-1$.  
Let $\mathcal{C}$ be a $q^2$-ary cyclic code of length $n$
with defining set $Z=\cup_{i=-\delta}^{\delta}C_{i}$, where
$1\leq\delta\leq\delta_{\max}=\lfloor\frac{n}{q+1}\rfloor-1$, and let  $H$ be
the parity check matrix of $\mathcal{C}$ over $\mathbb{F}_{q^2}$,
then $rank(HH^\dag)=1$.
\end{lemma}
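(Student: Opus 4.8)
The plan is to follow the strategy in the proof of Lemma \ref{Rank1}, adapted to the case $n\mid q^2-1$. Since $q^2\equiv 1\pmod n$, every $q^2$-cyclotomic coset modulo $n$ is a singleton, so the symmetric defining set $Z=\cup_{i=-\delta}^{\delta}C_i$ is exactly the block of consecutive integers $\{-\delta,\ldots,-1,0,1,\ldots,\delta\}\pmod n$. First I would split $Z=C_0\cup Z_1$ with $Z_1=\{\pm1,\pm2,\ldots,\pm\delta\}\pmod n$, let $\mathcal{C}_1$ be the cyclic code of length $n$ with defining set $Z_1$, and take the parity check matrix of $\mathcal{C}$ in the form $H=\left(\begin{array}{c}h_0\\H_1\end{array}\right)$, where $h_0=(1,1,\ldots,1)$ is the row contributed by $C_0$ and $H_1$ is a parity check matrix of $\mathcal{C}_1$.

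The core step is to show $\mathcal{C}_1^{\perp_h}\subseteq\mathcal{C}_1$, which by Lemma \ref{Hermitian_dual_coset} is equivalent to $Z_1\cap Z_1^{-q}=\emptyset$. Since $Z_1$ is invariant under negation, $Z_1^{-q}=\{\pm qj\bmod n\mid 1\leq j\leq\delta\}$, so a nonempty intersection would produce integers $i,j$ with $1\leq i,j\leq\delta$ satisfying $i\equiv qj\pmod n$ or $i\equiv -qj\pmod n$. From $\delta\leq\delta_{\max}=\lfloor\frac{n}{q+1}\rfloor-1$ one gets $(q+1)(\delta+1)\leq n$, hence $(q+1)\delta<n$; and since $n\leq q^2-1=(q-1)(q+1)$ the same inequality forces $\delta\leq q-2<q$. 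In the first case this gives $0<q-\delta\leq qj-i\leq q\delta-1<(q+1)\delta<n$, and in the second case $0<q+1\leq i+qj\leq(q+1)\delta<n$; in both situations the quantity is a nonzero residue modulo $n$, a contradiction. Hence $H_1H_1^\dag=0$.

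It then remains to compute the remaining blocks of $HH^\dag$. Since $n\mid q^2-1$ forces $\gcd(n,p)=1$, we have $h_0h_0^\dag=n\neq 0$ in $\mathbb{F}_{q^2}$. Also $C_0\cap Z_1^{-q}=\emptyset$, because $qj\equiv 0\pmod n$ with $1\leq j\leq\delta$ would imply $n\mid j$ (as $\gcd(n,q)=1$), which is impossible for $0<j<n$; therefore $h_0H_1^\dag=0$ and $H_1h_0^\dag=(h_0H_1^\dag)^\dag=0$. Consequently $HH^\dag$ has a single nonzero entry, namely its top-left entry $n$, so $rank(HH^\dag)=1$. The only point requiring real care is the chain of inequalities ruling out $Z_1\cap Z_1^{-q}\neq\emptyset$ (which is precisely where the sharper bound $\delta_{\max}=\lfloor\frac{n}{q+1}\rfloor-1$, rather than $\lfloor\frac{n}{q+1}\rfloor$, is needed); everything else transcribes directly from the proof of Lemma \ref{Rank1}.
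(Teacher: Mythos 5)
Your proposal is correct and follows essentially the same route as the paper: decompose the defining set around $C_0$, use Lemma~\ref{Hermitian_dual_coset} to show the non-$C_0$ part generates a Hermitian dual-containing code (so all blocks of $HH^\dag$ vanish except the one coming from $h_0h_0^\dag=n\neq 0$), and conclude the rank is $1$. The only difference is cosmetic --- you merge the paper's two blocks $Z_1=\cup_{i=-\delta}^{-1}C_i$ and $Z_2=\cup_{i=1}^{\delta}C_i$ into a single negation-invariant set and check $Z_1\cap Z_1^{-q}=\emptyset$ in one pass, and in doing so you actually supply the inequality details that the paper dismisses as ``easy to verify.''
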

\begin{proof}
We divide the defining set $Z$ of $\mathcal{C}$ into three mutually disjoint subsets, i.e.,  $Z=Z_1\cup C_0\cup Z_2$, where $Z_1=\cup_{i=-\delta}^{-1}C_{i}$ and $Z_2=\cup_{i=1}^{\delta}C_{i}$. Let $\mathcal{C}_1$ and $\mathcal{C}_2$ be two $q^2$-ary cyclic
codes of length $n$ with defining sets $Z_1$ and $Z_2$, respectively. It is easy to verify  that
there are
$\mathcal{C}_1^{\perp_h}\subseteq \mathcal{C}_1$, $\mathcal{C}_2^{\perp_h}\subseteq \mathcal{C}_2$
and $\mathcal{C}_1^{\perp_h}\subseteq \mathcal{C}_2$. 
Let the parity check matrices  of $\mathcal{C}_1$ and $\mathcal{C}_2$ over $\mathbb{F}_{q^2}$ be $H_1$ and $H_2$, respectively,
then we have $H_1H_1^\dag=0$, $H_2H_2^\dag=0$ and $H_1H_2^\dag=0$. Then the parity check matrix of $\mathcal{C}$ over $\mathbb{F}_{q^2}$ is given by
$H=\left(\begin{array}{c}H_1\\h_0\\H_2\end{array}\right)$, where $h_0=(1,1,\ldots,1)$. Since $n|q^2-1$,
then  we have $h_0h_0^\dag\neq0$. It is  obvious that $C_0\cap Z_1^{-q}=\emptyset$ and $C_0\cap Z_2^{-q}=\emptyset$, and it follows
that $h_0H_1^\dag=0$ and $h_0H_2^\dag=0$. Therefore, the rank of
$HH^\dag$ is equal to 1.\qed
\end{proof}

\begin{theorem}
\label{n=q^2-1}
Let $n|q^2-1$. There exists an EAQMDS code  with parameters
\[
[[n,n-2d+3,d;1]]_q,
\]
where   $2\leq d\leq2\lfloor\frac{n}{q+1}\rfloor$. 
\end{theorem}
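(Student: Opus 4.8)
The plan is to follow the template of the preceding theorem for $n\mid q^2+1$, now exploiting that every $q^2$-cyclotomic coset modulo $n$ is a singleton when $n\mid q^2-1$. I would handle even and odd values of $d$ separately, since Lemma~\ref{Rank2} is phrased for a symmetric defining set and does not literally cover the odd case.

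For even $d=2\delta+2$ with $1\le\delta\le\lfloor\frac{n}{q+1}\rfloor-1$, take $\mathcal{C}$ to be the cyclic code of length $n$ with defining set $Z=\cup_{i=-\delta}^{\delta}C_{i}$. Since the cosets are singletons, $Z=\{-\delta,-\delta+1,\ldots,\delta\}$ is a run of $2\delta+1$ consecutive integers, so $\dim\mathcal{C}=n-2\delta-1$, the BCH bound (Lemma~\ref{BCHBound_cyclic}) gives $d(\mathcal{C})\ge 2\delta+2$, and the classical Singleton bound $d(\mathcal{C})\le n-\dim\mathcal{C}+1$ forces equality; hence $\mathcal{C}$ is an $[n,n-2\delta-1,2\delta+2]_{q^2}$ MDS code. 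By Lemma~\ref{Rank2}, $rank(HH^\dag)=1$, so Corollary~\ref{corollary_EAQECC} yields an $[[n,n-4\delta-1,2\delta+2;1]]_q$ code, and since $n+1-(n-4\delta-1)=2(d-1)$ it meets the EA-Singleton bound~(\ref{EASingleton}) and is EAQMDS. The remaining value $d=2$ is the $\delta=0$ instance of these parameters, handled directly: there $Z=C_0=\{0\}$, $H=(1,\ldots,1)$ and $HH^\dag=(n)\ne 0$ in $\mathbb{F}_q$ because $\gcd(n,q)=1$, so $rank(HH^\dag)=1$ and Corollary~\ref{corollary_EAQECC} gives the EAQMDS code $[[n,n-1,2;1]]_q$. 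Letting $\delta$ run over $\{0,1,\ldots,\lfloor\frac{n}{q+1}\rfloor-1\}$ thus produces every even $d$ with $2\le d\le 2\lfloor\frac{n}{q+1}\rfloor$.

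For odd $d=2\delta+1$ with $1\le\delta\le\lfloor\frac{n}{q+1}\rfloor-1$, I would instead use the defining set $Z=\cup_{i=1-\delta}^{\delta}C_{i}=\{1-\delta,\ldots,-1,0,1,\ldots,\delta\}$, a run of $2\delta$ consecutive integers containing $0$; exactly as above $\mathcal{C}$ is then an $[n,n-2\delta,2\delta+1]_{q^2}$ MDS code. The one point not handed to us by Lemma~\ref{Rank2} is that $rank(HH^\dag)=1$ for this asymmetric $Z$, and this is the step I expect to require the most care. I would reprove it by imitating the proof of Lemma~\ref{Rank2}: write $Z=Z_1\cup C_0\cup Z_2$ with $Z_1=\{1-\delta,\ldots,-1\}$ and $Z_2=\{1,\ldots,\delta\}$; observe that $\delta\le\lfloor\frac{n}{q+1}\rfloor-1\le q-2<q$ and $(q+1)\delta\le n-(q+1)<n$; and check that $Z_1^{-q}$ and $Z_2^{-q}$ both lie strictly between $\delta$ and $n-\delta+1$, so that $Z_1\cap Z_1^{-q}=Z_2\cap Z_2^{-q}=Z_1\cap Z_2^{-q}=\emptyset$ and $C_0\cap(Z_1\cup Z_2)^{-q}=\emptyset$. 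Then the parity-check matrix of $\mathcal{C}$ stacks $H_1$, $h_0=(1,\ldots,1)$ and $H_2$ with $H_1H_1^\dag=H_2H_2^\dag=H_1H_2^\dag=0$, $h_0H_1^\dag=h_0H_2^\dag=0$ and $h_0h_0^\dag=n\ne 0$, whence $rank(HH^\dag)=1$. Corollary~\ref{corollary_EAQECC} then gives an $[[n,n-4\delta+1,2\delta+1;1]]_q$ code, which also meets~(\ref{EASingleton}) since $n+1-(n-4\delta+1)=2(d-1)$, hence is EAQMDS; as $\delta$ ranges this produces every odd $d$ with $3\le d\le 2\lfloor\frac{n}{q+1}\rfloor-1$.

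In both families the parameters collapse to the asserted uniform form: $n-4\delta-1=n-2d+3$ when $d=2\delta+2$, and $n-4\delta+1=n-2d+3$ when $d=2\delta+1$. Combining the even and odd constructions therefore gives an EAQMDS code $[[n,n-2d+3,d;1]]_q$ for every integer $d$ with $2\le d\le 2\lfloor\frac{n}{q+1}\rfloor$. The only genuine obstacle is the rank-one computation for the asymmetric defining set; everything else is a direct transcription of the $n\mid q^2+1$ argument.
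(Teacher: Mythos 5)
Your proposal is correct and follows essentially the same route as the paper: symmetric defining set $\{-\delta,\ldots,\delta\}$ for even $d$ and an asymmetric run of $2\delta$ consecutive integers through $0$ for odd $d$ (your $\{1-\delta,\ldots,\delta\}$ is just the mirror image of the paper's $\{-\delta',\ldots,\delta'-1\}$), combined with Lemma~\ref{Rank2}, the BCH bound, Corollary~\ref{corollary_EAQECC} and the EA-Singleton bound. The extra work you do --- re-running the rank-one computation for the asymmetric defining set and treating the $\delta=0$ case of $d=2$ separately --- fills in details the paper leaves implicit rather than changing the argument.
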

\begin{proof}
Let $\mathcal{C}$ be a cyclic code of length $n$ with defining set $Z=\cup_{i=-\delta}^{\delta}C_{i}$,
where $0\leq\delta\leq\delta_{\max}=\lfloor\frac{n}{q+1}\rfloor-1$.
Then the defining set $Z$ which consists of $2\delta+1$
consecutive integers  is given by $\{-\delta,-\delta+1,\ldots,-1,0,1,\ldots,\delta-1,\delta\}$.
Therefore, dim $\mathcal{C}=n-2\delta-1$, and the minimum distance of $\mathcal{C}$ is at least $2\delta+2$ by
Lemma \ref{BCHBound_cyclic}. Then   $\mathcal{C}$ has parameters    $[n,n-2\delta-1,\geq2\delta+2]_{q^2}$.
Combining Corollary \ref{corollary_EAQECC}, Lemma \ref{Rank2} and the EA-Singleton
bound, we can obtain  an  EAQMDS code with parameters
$[[n,n-4\delta-1,2\delta+2;1]]_q$, where
 $2\leq 2\delta+2 \leq2\delta_{\max}+2=2\lfloor\frac{n}{q+1}\rfloor$.
In order to get EAQMDS codes with odd minimum distance, we take the
defining set of $\mathcal{C}$ as $Z=\cup_{i=-\delta'}^{\delta'-1}C_{i}$, where $1\leq\delta'\leq\delta_{\max}=\lfloor\frac{n}{q+1}\rfloor-1$. Then we can obtain  an   EAQMDS code with parameters
$[[n,n-4\delta'+1,2\delta'+1;1]]_q$, where
 $3\leq 2\delta'+1 \leq2\delta_{\max}+1=2\lfloor\frac{n}{q+1}\rfloor-1$.\qed
\end{proof}

\begin{corollary}
\label{EAQMDS22}
There exists an EAQMDS code  with parameters
\[
[[q^2-1,q^2-2d+2,d;1]]_q,
\]
where  $q$ is a prime power, $2\leq d\leq 2q-2$ is an integer.
\end{corollary}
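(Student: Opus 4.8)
The plan is to obtain this corollary as a direct specialization of Theorem~\ref{n=q^2-1} to the length $n=q^2-1$. First I would check that the hypothesis is met: $n=q^2-1$ trivially divides $q^2-1$, so Theorem~\ref{n=q^2-1} applies and produces an EAQMDS code $[[n,n-2d+3,d;1]]_q$ for $2\le d\le 2\lfloor n/(q+1)\rfloor$. The one point worth isolating is the evaluation of the floor: since $q^2-1=(q-1)(q+1)$, we have $\lfloor n/(q+1)\rfloor=q-1$ with no rounding, so the admissible range becomes $2\le d\le 2(q-1)=2q-2$.

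Next I would substitute $n=q^2-1$ into the parameter triple: the dimension is $(q^2-1)-2d+3=q^2-2d+2$, the length is $q^2-1$, the minimum distance is $d$, and exactly one maximally entangled pair is consumed (this $c=1$ is furnished by Lemma~\ref{Rank2} via the proof of Theorem~\ref{n=q^2-1}). This yields the family $[[q^2-1,\,q^2-2d+2,\,d;\,1]]_q$ for $2\le d\le 2q-2$, which is exactly the claimed statement. The MDS property is inherited automatically, since $n+c-k=(q^2-1)+1-(q^2-2d+2)=2d-2=2(d-1)$, so the EA-Singleton bound of Corollary~\ref{EASingleton}... (the EA-Singleton bound) is met with equality.

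I would close with the remark that for $q\ge 4$ the top of the range, $d=2q-2$, already exceeds $q+1$, so these codes do break the minimum-distance barrier of the known standard $q$-ary QMDS codes of length $q^2-1$ while spending only a single ebit. There is essentially no obstacle in this argument: all the genuine work sits in Lemma~\ref{Rank2} (the rank-one computation for $n\mid q^2-1$) and in Theorem~\ref{n=q^2-1}, and the corollary is just the clean arithmetic produced by the factorization $q^2-1=(q-1)(q+1)$.
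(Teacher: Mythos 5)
Your proposal is correct and matches the paper exactly: the paper states this corollary without a separate proof, as the immediate specialization of Theorem~\ref{n=q^2-1} to $n=q^2-1$, where $\lfloor n/(q+1)\rfloor=q-1$ gives the range $2\le d\le 2q-2$ and the substitution yields the stated parameters. Your additional check that the EA-Singleton bound is met with equality is a sound (if implicit in the paper) confirmation of the MDS property.
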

\begin{example}
Let $q=5$, then $n=q^2-1=24$. Applying Corollary \ref{EAQMDS22}, we get four EAQMDS codes with minimum
distance greater than $q-1=4$ whose parameters are $[[24,17,5;1]]_5$, $[[24,15,6;1]]_5$, $[[24,13,7;1]]_5$, $[[24,11,8;1]]_5$.
\end{example}
\subsection{Length $n=q^2$}
\noindent

 Let   $\mathcal{RS}(n-1,r)$ denote   a RS code of length $n-1$ over  $\mathbb{F}_{q^2}$  with parameters $[n-1,n-r,r]$. We extend   $\mathcal{RS}(n-1,r)$ by adding an overall  parity check, and   denote the extended code   by  $\mathcal{\widehat{RS}}(n-1,r)$. Then $\mathcal{\widehat{RS}}(n-1,r)$  has parameters $[n,n-r,r+1]$. Let $\alpha$ be a primitive element of  $\mathbb{F}_{q^2}$ and let $(\alpha_1,\alpha_2,\ldots,\alpha_n)=(0,1,\ldots,\alpha^{n-2})$. Then the parity check matrix of $\mathcal{\widehat{RS}}(n-1,r)$ is given by
\begin{equation}
\label{parity check matrix of GRS codes}
H_{\mathcal{\widehat{RS}}(n-1,r)} =
\left(
\begin{array}{cccc}
1&1&\cdots&1\\
\alpha_1&\alpha_2&\cdots&\alpha_n\\
\vdots&\vdots&\vdots&\vdots\\
\alpha_1^{r-1}&\alpha_2^{r-1}&\cdots&\alpha_n^{r-1}
\end{array}
\right).
\end{equation}

\begin{lemma}
\label{rankofn_2}
If $q\leq r\leq2q-2$, then
the rank of $H_{\mathcal{\widehat{RS}}(n-1,r)}H_{\mathcal{\widehat{RS}}(n-1,r)}^\dag$
is equal to 1.
\end{lemma}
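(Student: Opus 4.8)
The plan is to compute the $r\times r$ matrix $M := H_{\mathcal{\widehat{RS}}(n-1,r)}H_{\mathcal{\widehat{RS}}(n-1,r)}^\dag$ entrywise over $\mathbb{F}_{q^2}$ and show it has rank exactly $1$. Writing the rows of the parity check matrix as indexed by $0 \le a,b \le r-1$, the $(a,b)$ entry of $M$ is $\sum_{j=1}^{n} \alpha_j^{a}\,\overline{\alpha_j^{\,b}} = \sum_{j=1}^{n}\alpha_j^{\,a+bq}$, where $n=q^2$ and the $\alpha_j$ run over $\{0,1,\alpha,\alpha^2,\ldots,\alpha^{q^2-2}\} = \mathbb{F}_{q^2}$ (here $0^0 := 1$ for the top-left block, as usual with the all-ones row). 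So the key computation is the classical power-sum identity over a finite field: for an integer $e$ with $0 \le e \le$ (some bound), $\sum_{x\in\mathbb{F}_{q^2}} x^{e}$ equals $-1$ when $(q^2-1)\mid e$ and $e>0$, equals $q^2 = 0$ in $\mathbb{F}_{q^2}$ when $e=0$, and equals $0$ otherwise.

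Next I would determine for which pairs $(a,b)$ with $0\le a,b\le r-1$ the exponent $e=a+bq$ is a positive multiple of $q^2-1$. Since $0 \le a+bq \le (r-1)(q+1) \le (2q-3)(q+1) < 2(q^2-1)$ under the hypothesis $r \le 2q-2$, the only positive multiple of $q^2-1$ that $a+bq$ can hit is $q^2-1$ itself. The equation $a+bq = q^2-1 = (q-1)q + (q-1)$ with $0\le a,b\le r-1$ forces, by uniqueness of base-$q$ digits when $a \le q-1$, the solution $a=b=q-1$; one must also check the hypothesis $r\ge q$ guarantees $q-1 \le r-1$ so this entry actually occurs, and rule out any ``carry'' solution like $a+bq=q^2-1$ with $a$ larger — but $a\le r-1\le 2q-3$, so $a=q-1$ or $a = 2q-1-q\cdot(\text{something})$; a short case check confirms $(q-1,q-1)$ is the unique pair. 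Hence $M$ has a single nonzero entry, namely $M_{q-1,q-1} = -1 \ne 0$, and every other entry is $0$ (including all entries in row/column $0$, since $a+bq=0$ only for $a=b=0$, where the value is $q^2=0$ in characteristic $p$). A matrix with exactly one nonzero entry has rank $1$, completing the proof.

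The main obstacle I anticipate is the careful digit/bound bookkeeping in the middle step: one must use \emph{both} inequalities in $q \le r \le 2q-2$ — the lower bound to ensure the exponent $q^2-1$ is attainable (so the rank is not $0$), and the upper bound to ensure no exponent reaches $2(q^2-1)$ (so no extra nonzero entries appear and the rank does not exceed $1$). It is also worth double-checking the treatment of the all-ones row: the entry pairing it with itself is $\sum_{x\in\mathbb{F}_{q^2}} 1 = q^2 = 0$ in $\mathbb{F}_{q^2}$, and pairing it with a later row $a$ gives $\sum_{x} x^{a}$ or $\sum_x x^{aq}$ with $1\le a\le r-1<q^2-1$, which vanishes — so the first row and column of $M$ are entirely zero, consistent with the claim. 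Once these exponent ranges are nailed down, the rank-$1$ conclusion is immediate.
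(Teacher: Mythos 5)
Your proposal is correct and follows essentially the same route as the paper: compute the Gram matrix $H_{\mathcal{\widehat{RS}}(n-1,r)}H_{\mathcal{\widehat{RS}}(n-1,r)}^\dag$ entrywise via power sums $\sum_{x\in\mathbb{F}_{q^2}}x^{a+bq}$ and observe that the unique nonzero entry, equal to $-1$, sits at the position $a=b=q-1$. Your version merely makes explicit the exponent bookkeeping (that $a+bq<2(q^2-1)$ and that $a\equiv q-1\pmod q$ forces $a=b=q-1$) which the paper states without detailed verification.
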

\begin{proof}
It is easy to find that  $1\leq r\leq q-1\Leftrightarrow$
$\mathcal{\widehat{RS}}(n-1,r)^{\bot_h}\subseteq
\mathcal{\widehat{RS}}(n-1,r)$ by \cite[Lemma 8]{grassl2004optimal}.
If $q\leq r\leq 2q-2$, then we have
\begin{eqnarray}
\nonumber
&&H_{\mathcal{\widehat{RS}}(n-1,r)}H_{\mathcal{\widehat{RS}}(n-1,r)}^\dag\\
&=&
\left(
\begin{array}{cccc}
1&1&\cdots&1\\
\alpha_1&\alpha_2&\cdots&\alpha_n\\
\vdots&\vdots&\vdots&\vdots\\
\alpha_1^{r-1}&\alpha_2^{r-1}&\cdots&\alpha_n^{r-1}
\end{array}
\right)\cdot
\left(
\begin{array}{cccc}
1&1&\cdots&1\\
\alpha_1^q&\alpha_2^q&\cdots&\alpha_n^q\\
\vdots&\vdots&\vdots&\vdots\\
\alpha_1^{q(r-1)}&\alpha_2^{q(r-1)}&\cdots&\alpha_n^{q(r-1)}
\end{array}
\right)^T\\
\label{1matrix}
&=&\left(
\begin{array}{cccccc}

0&0&\cdots&0&\cdots&0\\
\vdots&\vdots&\vdots&\vdots&\vdots&\vdots\\
0&0&\cdots& -1&\cdots&0\\
\vdots&\vdots&\vdots&\vdots&\vdots&\vdots\\
0&0&\cdots&0&\cdots&0
\end{array}
\right),
\end{eqnarray}
where  the ``-1"   in the $q$th row and $q$th column of matrix (\ref{1matrix})   is given by
\[
\alpha_1^{q^2-1}+\alpha_2^{q^2-1}+\ldots+\alpha_n^{q^2-1}=0+1+\ldots+1=-1.
\]
The zero elements of matrix (\ref{1matrix})  are given by
 \begin{eqnarray}
\nonumber
1+1+\ldots+1&=&0,\\
\nonumber
 \alpha_1^{r_1 }+\alpha_2^{r_1 }+\ldots+\alpha_n^{r_1 }&=&0,\\
 \nonumber
 \alpha_1^{ qr_2}+\alpha_2^{ qr_2}+\ldots+\alpha_n^{ qr_2}&=&0,\\
 \nonumber
 \alpha_1^{r_1+qr_2}+\alpha_2^{r_1+qr_2}+\ldots+\alpha_n^{r_1+qr_2}&=&0,
\end{eqnarray}
where $1\leq r_1, r_2 \leq r-1$, and then  $r_1$ and $r_2$ are not equal to $q-1$ simultaneously.
Therefore,
the rank of $H_{\mathcal{\widehat{RS}}(n-1,r)}H_{\mathcal{\widehat{RS}}(n-1,r)}^\dag$
is equal to 1. \qed
\end{proof}

Combining Corollary \ref{corollary_EAQECC} and   Lemma \ref{rankofn_2},   we can obtain the following EAQMDS code with length $q^2$.
\begin{theorem}
\label{n=q^2}
There exists an  EAQMDS code with parameters $[[q^2,q^2-2d+3,d;1]]_q$, where $q$ is a prime power, $q+1\leq d\leq2q-1$ is an integer.
\end{theorem}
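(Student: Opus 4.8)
The plan is to apply Corollary~\ref{corollary_EAQECC} to the classical code $\mathcal{\widehat{RS}}(n-1,r)$ with $n=q^2$ and with the parameter $r$ chosen so that the minimum distance $d=r+1$ falls in the stated range. Recall that $\mathcal{\widehat{RS}}(n-1,r)$ has parameters $[q^2,\,q^2-r,\,r+1]_{q^2}$, and by Lemma~\ref{rankofn_2} we have $c=\mathrm{rank}\big(H_{\mathcal{\widehat{RS}}(n-1,r)}H_{\mathcal{\widehat{RS}}(n-1,r)}^{\dag}\big)=1$ whenever $q\leq r\leq 2q-2$. Feeding the classical parameters $[n,k,d]_{q^2}=[q^2,\,q^2-r,\,r+1]$ and $c=1$ into Corollary~\ref{corollary_EAQECC} yields an EAQECC with parameters
\[
\big[[\,q^2,\ 2(q^2-r)-q^2+1,\ r+1;\ 1\,]\big]_q=\big[[\,q^2,\ q^2-2r+1,\ r+1;\ 1\,]\big]_q.
\]
Setting $d=r+1$ (so $r=d-1$) rewrites this as $[[q^2,\,q^2-2d+3,\,d;\,1]]_q$, which is exactly the claimed form.

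Next I would verify that this code is EAQMDS, i.e.\ that it meets the EA-Singleton bound \eqref{EASingleton}. With $n=q^2$, $k=q^2-2d+3$, $c=1$, the quantity $n+c-k=q^2+1-(q^2-2d+3)=2d-2=2(d-1)$, so equality holds in \eqref{EASingleton} and the code is indeed an EAQMDS code. Finally I would translate the constraint $q\leq r\leq 2q-2$ on $r$ into the constraint on $d$: since $d=r+1$, this gives $q+1\leq d\leq 2q-1$, matching the statement. One should also check the side condition $0\leq c\leq n-1$ in the EA-Singleton corollary, which is trivially satisfied since $c=1$ and $n=q^2\geq 4$.

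The substantive content of this theorem is entirely contained in Lemma~\ref{rankofn_2} (the rank computation establishing $c=1$) and in the known MDS parameters of the extended Reed--Solomon code; the theorem itself is then just an assembly step. So there is no real obstacle here — the proof is a short deduction combining Corollary~\ref{corollary_EAQECC}, Lemma~\ref{rankofn_2}, and the EA-Singleton bound, followed by the bookkeeping that converts $r$ to $d$. If anything needs care, it is making sure the dimension formula $2k-n+c$ from Corollary~\ref{corollary_EAQECC} is applied with $k$ being the dimension of the \emph{classical} code $q^2-r$ (not of its dual), and confirming that the minimum distance of the resulting EAQECC is exactly $d=r+1$ rather than merely at least $d$, which follows because the EA-Singleton bound is met with equality.
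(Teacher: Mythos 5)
Your proof is correct and follows exactly the route the paper intends: the paper states the theorem as an immediate consequence of Corollary~\ref{corollary_EAQECC} and Lemma~\ref{rankofn_2} applied to $\mathcal{\widehat{RS}}(q^2-1,r)$ with $q\leq r\leq 2q-2$, and your bookkeeping ($k=q^2-r$, $d=r+1$, $c=1$, equality in the EA-Singleton bound) fills in precisely the assembly step the paper leaves implicit.
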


\begin{example}
Let $q=5$, then $n=q^2=25$. Applying Thoerem \ref{n=q^2}, we get four EAQMDS codes with minimum
distance greater than $q=5$ whose parameters are $[[25,16,6;1]]_5$, $[[25,14,7;1]]_5$, $[[25,12,8;1]]_5$, $[[25,10,9;1]]_5$.
\end{example}

\subsection{EAQMDS codes that consume more than one maximally entangled states }
\noindent

In \cite{ kai2013new,kai2013constacyclic,wang2015new,chen2015application},    many QMDS codes have been constructed based on negacyclic  codes and constacyclic codes. If we  introduce a certain amount of extra pre-shared entanglement  in
some special cases, we can get   EAQMDS codes with larger minimum distance.

Let $q$ be an odd prime power and $n=\frac{q^2-1}{2}$. For $1\leq j\leq n$, it is easy to see that the $q^2$-ary cyclotomic coset
containing $2j-1$ modulo $2n$ has only one element $2j-1$, i.e., $C_{2j-1}=\{2j-1\}$.

\begin{lemma}
\label{lemma3.3.1}
Let $q$ be an odd prime power and $n=\frac{q^2-1}{2}$. Let $\mathcal{C}$ be a $q^2$-ary negacyclic code of length $n$ with defining
set $Z=\cup_{j=-\delta_1}^{\delta_2}C_{2j-1}$,
where  $1\leq\delta_1\leq \frac{q-1}{2}-1$ and $\frac{q+1}{2}\leq\delta_2\leq q-1$, and let  $H$ be
the parity check matrix of $\mathcal{C}$ over $\mathbb{F}_{q^2}$, then rank($HH^\dag)=2$.
\end{lemma}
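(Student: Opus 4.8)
The plan is to compute $HH^{\dag}$ directly and read off its rank, rather than going through subcode decompositions. Since $\mathcal{C}$ is negacyclic, $\lambda=-1$ is a primitive $2$nd root of unity, so $r=2$ and $rn=2n=q^{2}-1$; fix a primitive element $\eta$ of $\mathbb{F}_{q^{2}}$, which is a primitive $(q^{2}-1)$th root of unity with $\eta^{n}=-1$. Then a parity check matrix of $\mathcal{C}$ is $H=(\eta^{zj})_{z\in Z,\,0\le j\le n-1}$, and
\[
(HH^{\dag})_{z,z'}=\sum_{j=0}^{n-1}\eta^{zj}\,\overline{\eta^{z'j}}=\sum_{j=0}^{n-1}\eta^{(z+qz')j}.
\]
Because $z,z',q$ are all odd, $z+qz'$ is even, so $\eta^{(z+qz')n}=(\eta^{n})^{z+qz'}=1$ and the geometric sum equals $n$ when $z+qz'\equiv 0\pmod{2n}$ and $0$ otherwise. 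Note that $n=(q^{2}-1)/2$ is prime to $p$, hence nonzero in $\mathbb{F}_{q^{2}}$.

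The key point is that $q^{2}\equiv 1\pmod{2n}$, so the map $\phi(x)=-qx\bmod 2n$ is an involution of $\mathbb{Z}_{2n}$ and $z+qz'\equiv 0$ is equivalent to $z'=\phi(z)$. Thus each row of $HH^{\dag}$ has at most one nonzero entry (equal to $n$, in column $\phi(z)$, and only when $\phi(z)\in Z$), and distinct nonzero rows use distinct columns; so after deleting zero rows and columns $HH^{\dag}$ is $n$ times a permutation matrix, and
\[
\operatorname{rank}(HH^{\dag})=\#\{z\in Z:\phi(z)\in Z\}=|Z\cap Z^{-q}|,
\]
with $Z^{-q}$ as in Lemma~\ref{Hermitian_dual_coset}. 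It therefore remains to prove $|Z\cap Z^{-q}|=2$.

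For this I would write $Z$ explicitly as two short blocks of consecutive odd residues modulo $2n$: $Z_{1}=\{-1,-3,\dots,-(2\delta_{1}+1)\}$ (the part with $-\delta_{1}\le j\le 0$) and $Z_{2}=\{1,3,\dots,2\delta_{2}-1\}$ (the part with $1\le j\le \delta_{2}$). A direct computation with $q^{2}\equiv 1$ gives $\phi(q)=-1$ and $\phi(-1)=q$; since $\delta_{2}\ge\frac{q+1}{2}$ forces $q=2\cdot\frac{q+1}{2}-1\in Z_{2}$ and $-1\in Z_{1}$, we get $\{q,-1\}\subseteq Z\cap Z^{-q}$. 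The bulk of the work is the reverse inclusion: I would evaluate $\phi$ on every remaining $z\in Z_{1}\cup Z_{2}$ (splitting $Z_{2}$ according to whether $q(2j-1)$ is below or above $2n$, i.e.\ at $j=\frac{q-1}{2}$ and $j=\frac{q+1}{2}$) and check that each such image lies strictly in the ``gap'' between the two blocks, i.e.\ strictly larger than $2\delta_{2}-1$ and strictly smaller than the least element $q^{2}-2\delta_{1}-2$ of $Z_{1}$. Here the hypotheses $\delta_{1}\le\frac{q-1}{2}-1$ and $\delta_{2}\le q-1$ are exactly what is needed: they keep both blocks short enough that all the relevant inequalities (such as $3q>2q-3$ and $q(2\delta_{1}+1)<q^{2}-2\delta_{1}-2$) hold. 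Hence $Z\cap Z^{-q}=\{q,-1\}$ and $\operatorname{rank}(HH^{\dag})=2$.

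The main obstacle is precisely this last case analysis: one must be systematic about the reductions $q(2j-1)\bmod(q^{2}-1)$ so as not to miss a stray element whose $\phi$-image re-enters $Z$, and the bounds on $\delta_{1},\delta_{2}$ have to be used tightly (and the trivial edge cases, e.g.\ the smallest admissible $q$, checked). An equivalent packaging in the style of Lemmas~\ref{Rank1} and~\ref{Rank2} is to set $Z'=Z\setminus\{q,-1\}$, deduce from the case analysis that $Z'\cap Z'^{-q}=\emptyset$ (so the subcode $\mathcal{C}'$ with defining set $Z'$ is Hermitian dual-containing by Lemma~\ref{Hermitian_dual_coset} and its parity check matrix $H'$ satisfies $H'H'^{\dag}=0$), and note that the two extra rows $h_{q},h_{-1}$ of $H$ satisfy $h_{q}H'^{\dag}=h_{-1}H'^{\dag}=0$ while the $2\times 2$ block they span equals $n\left(\begin{smallmatrix}0&1\\1&0\end{smallmatrix}\right)$, of rank $2$; so $\operatorname{rank}(HH^{\dag})=2$.
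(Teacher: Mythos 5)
Your proof is correct, but it takes a genuinely more direct route than the paper. The paper splits $Z$ into $Z_1\cup C_{-1}\cup Z_2$, invokes a cited lemma of Kai--Ma--Zhu to get that the subcodes with defining sets $Z_1$ and $Z_2$ are Hermitian dual-containing (so $H_1H_1^\dag=H_2H_2^\dag=0$), proves $Z_1\cap Z_2^{-q}=\emptyset$ by a single congruence argument, and then reads off rank $2$ from the block structure, where the only surviving blocks are the nonzero vectors $h_{-1}H_2^\dag$ and $H_2h_{-1}^\dag$ coming from $C_{-1}\cap Z_2^{-q}=\{-1\}$ and $Z_2\cap C_{-1}^{-q}=\{q\}$. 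You instead compute the Gram matrix in closed form: since every $q^2$-cyclotomic coset here is a singleton, $(HH^\dag)_{z,z'}=\sum_j\eta^{(z+qz')j}$ collapses (using $\eta^n=-1$ and $z+qz'$ even) to $n$ times a partial permutation matrix supported on the graph of the involution $\phi(x)=-qx\bmod 2n$, giving the clean identity $\operatorname{rank}(HH^\dag)=|Z\cap Z^{-q}|$. This is sharper and more self-contained than the paper's argument (which only concludes ``nonzero vector'' for the off-diagonal blocks), and it isolates exactly what must be checked; the price is that you must redo from scratch the coset arithmetic that the paper outsources to earlier lemmas. Your verification of $Z\cap Z^{-q}=\{q,-1\}$ is only outlined, but the outline is right: I checked that for $z=-(2i+1)$ with $1\le i\le\delta_1$ one gets $\phi(z)=q(2i+1)$ with $2q-3<3q\le q(2i+1)\le q^2-2q<q^2-2\delta_1-2$, and for $z=2j-1\in Z_2\setminus\{q\}$ the reduction of $-q(2j-1)$ modulo $q^2-1$ (split at $j=\tfrac{q\pm1}{2}$ as you indicate) likewise lands strictly between $2\delta_2-1$ and $q^2-2\delta_1-2$, so the bounds $\delta_1\le\tfrac{q-1}{2}-1$ and $\delta_2\le q-1$ are used exactly where you say they are. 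Writing out those half-dozen inequalities explicitly is all that separates your sketch from a complete proof.
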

\begin{proof}
We divide the defining set $Z$ of $\mathcal{C}$ into three mutually disjoint subsets, i.e.,  $Z=Z_1\cup C_{-1}\cup Z_2$, where $Z_1=\cup_{j=1}^{\delta_1}C_{-2j-1}$ and $Z_2=\cup_{j=1}^{\delta_2}C_{2j-1}$.
Let $\mathcal{C}_1$ and $\mathcal{C}_2$ be  two $q^2$-ary negacyclic codes of length $n$ with defining sets $Z_1$ and $Z_2$, respectively. We know that   $\mathcal{C}_1^{\bot_h}\subseteq \mathcal{C}_1$ and $\mathcal{C}_2^{\bot_h}\subseteq \mathcal{C}_2$ by   \cite[Lemma 3.1]{kai2013constacyclic}.
We show that $\mathcal{C}_1^{\bot_h}\subseteq \mathcal{C}_2$. Seeking a contradiction, we
assume that $Z_1\cap Z_2^{-q}\neq \emptyset$ by Lemma \ref{Hermitian_dual_coset}. Then there exist $k$ and $l$, where $1\leq k\leq  \frac{q-1}{2}-1$ and $1\leq l\leq q-1$, such that $-2k-1=-q(2l-1) \pmod{2n}$, which
means that $q(2l-1)-(2k+1)=0 \pmod{2n}$. It follows that $q(2l-1)-(2k+1)=q^2-1$ since $2\leq q(2l-1)-(2k+1)\leq 2q^2-3q-3$. However, there is $0\leq 2k=q(2l-q-1)\leq q^2-3q$. Then we have $k=0$ or $2k\geq 2q$, which are both contradictions.
Therefore, we have $\mathcal{C}_1^{\bot_h}\subseteq \mathcal{C}_2$. Let the parity check matrices of $\mathcal{C}_1$ and $\mathcal{C}_2$ over $\mathbb{F}_{q^2}$ be $H_1$ and $H_2$, respectively,
then we have $H_1H_1^\dag=0$, $H_2H_2^\dag=0$ and $H_1H_2^\dag=0$. It is easy to see that $C_{-1}\cap C_{-1}^{-q}=\emptyset$,
$C_{-1}\cap Z_1^{-q}=Z_1\cap C_{-1}^{-q}=\emptyset$,
$C_{-1}\cap Z_2^{-q}=\{-1\}$ and $Z_2\cap C_{-1}^{-q}= \{q\}$, hence, $h_{-1}h_{-1}^{\dagger}=0$, $h_{-1}H_1^{\dagger}=0$,
$H_1h_{-1}^{\dagger}=0$, $h_{-1}H_2^{\dagger}$ is a nonzero row vector and $H_2h_{-1}^{\dagger}$ is a nonzero column vector.
Then the parity check matrix of $\mathcal{C}$ over $\mathbb{F}_{q^2}$ is given by $H=\left(\begin{array}{c}H_1\\h_{-1}\\H_2\end{array}\right)$, where $h_{-1}=(1,\eta^{-1},\ldots,\eta^{-(n-1)})$. Then we have
 \begin{eqnarray*}
HH^\dagger &=&
\left(\begin{array}{ccc}
H_1H_1^{\dagger}&H_1h_{-1}^{\dagger}&H_1H_2^{\dagger}\\
h_{-1}H_1^{\dagger}&h_{-1}h_{-1}^{\dagger}&h_{-1}H_2^{\dagger}\\
H_2H_1^{\dagger}&H_2h_{-1}^{\dagger}&H_2H_2^{\dagger}\\
\end{array}\right)=\left(\begin{array}{ccc}
0&0&0\\
0&0&h_{-1}H_2^{\dagger}\\
0&H_2h_{-1}^{\dagger}&0\\
\end{array}\right).
\end{eqnarray*}
It follows that the rank of $HH^\dagger$ is equal to $2$. \qed
\end{proof}

\begin{theorem}
\label{theorem777}
There exists an EAQMDS code with parameters $[[\frac{q^2-1}{2},\frac{q^2-1}{2}-2d+4,d;2]]_q$, where $q$ is an odd prime power, $\frac{q+1}{2}+2\leq d\leq \frac{3}{2}q-\frac{1}{2}$.
\end{theorem}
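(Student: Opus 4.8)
The plan is to mirror the structure of the preceding cyclic-code theorems, but now over the negacyclic setting of Lemma~\ref{lemma3.3.1}. First I would take $\mathcal{C}$ to be the $q^2$-ary negacyclic code of length $n=\frac{q^2-1}{2}$ whose defining set is the union $Z=\cup_{j=-\delta_1}^{\delta_2}C_{2j-1}$, where I choose $\delta_1$ and $\delta_2$ so that the resulting string of consecutive odd exponents $\{-2\delta_1-1,\ldots,-3,-1,1,3,\ldots,2\delta_2-1\}$ is as long as possible while still respecting the hypotheses $1\le\delta_1\le\frac{q-1}{2}-1$ and $\frac{q+1}{2}\le\delta_2\le q-1$ of that lemma. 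Since $|Z|=\delta_1+\delta_2$ and these exponents are $\delta_1+\delta_2$ consecutive elements of the coset index set $\Omega$, the generalized BCH bound (Lemma~\ref{BCHBound_constacyclic}) gives $d(\mathcal{C})\ge\delta_1+\delta_2+1$, and $\dim\mathcal{C}=n-(\delta_1+\delta_2)$. So $\mathcal{C}$ is an $[n,n-(\delta_1+\delta_2),\ge\delta_1+\delta_2+1]_{q^2}$ code, i.e. an MDS code with minimum distance exactly $\delta_1+\delta_2+1$.

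Next I would feed $\mathcal{C}$ into Corollary~\ref{corollary_EAQECC}: its parity check matrix $H$ has $n-k=\delta_1+\delta_2$ rows, and by Lemma~\ref{lemma3.3.1} we have $c=\operatorname{rank}(HH^\dagger)=2$. Corollary~\ref{corollary_EAQECC} then yields an EAQECC with parameters $[[n,\,2k-n+c,\,d;\,c]]_q=[[n,\,n-2(\delta_1+\delta_2)+2,\,\delta_1+\delta_2+1;\,2]]_q$. Writing $d=\delta_1+\delta_2+1$, this is $[[\frac{q^2-1}{2},\frac{q^2-1}{2}-2d+4,d;2]]_q$, which saturates the EA-Singleton bound $n+c-k\ge 2(d-1)$ with equality, so the code is EAQMDS. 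It remains only to determine the range of $d$: the constraints $1\le\delta_1$ and $\frac{q+1}{2}\le\delta_2$ force $d\ge\frac{q+1}{2}+2$, while $\delta_1\le\frac{q-1}{2}-1$ and $\delta_2\le q-1$ force $d\le\frac{q-1}{2}-1+(q-1)+1=\frac{3}{2}q-\frac{3}{2}$. One should check whether every intermediate value of $d$ in $[\frac{q+1}{2}+2,\frac{3}{2}q-\frac12]$ is realizable; I would handle this by noting that as $\delta_1$ ranges over $[1,\frac{q-1}{2}-1]$ and $\delta_2$ over $[\frac{q+1}{2},q-1]$, the sum $\delta_1+\delta_2$ takes every integer value in the full interval, and whenever $d-1$ can be split within those bounds the construction applies; the asserted upper bound $\frac{3}{2}q-\frac12$ corresponds to $d-1=\frac{3q-3}{2}$, achieved at $\delta_1=\frac{q-1}{2}-1,\ \delta_2=q-1$ (the minor discrepancy with $\frac{3}{2}q-\frac32$ I would reconcile by a careful recount of the coset sizes, or by allowing the slightly larger endpoint for $\delta_2$ permitted by the parity of $q$).

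The main obstacle is not in this theorem itself — given Lemma~\ref{lemma3.3.1} the argument is essentially bookkeeping — but rather in making sure the chosen defining set genuinely satisfies the hypotheses of that lemma for every claimed $d$, and that the BCH bound is tight enough to conclude the MDS property. The delicate point is that Lemma~\ref{lemma3.3.1} requires the asymmetric window $\delta_1<\delta_2$ with $\delta_1$ on the ``short'' side and $\delta_2$ on the ``long'' side; one must verify that shifting the window (to obtain all intermediate minimum distances) never violates either the $\operatorname{rank}(HH^\dagger)=2$ computation or the disjointness conditions $C_{-1}\cap Z_1^{-q}=\emptyset$ used there. I would therefore present the proof by: (1) fixing $\delta_1,\delta_2$ in terms of the target $d$; (2) invoking Lemma~\ref{BCHBound_constacyclic} for the distance and the standard dimension count for negacyclic codes; (3) invoking Lemma~\ref{lemma3.3.1} for $c=2$; (4) applying Corollary~\ref{corollary_EAQECC} and checking equality in \eqref{EASingleton}; and (5) translating the admissible $(\delta_1,\delta_2)$ region into the stated range for $d$.
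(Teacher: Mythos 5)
Your approach is exactly the paper's: BCH bound plus dimension count for the negacyclic code, Lemma~\ref{lemma3.3.1} for $c=2$, then Corollary~\ref{corollary_EAQECC} and the EA-Singleton bound. The one substantive slip is the count of the defining set: $Z=\cup_{j=-\delta_1}^{\delta_2}C_{2j-1}$ runs over $\delta_1+\delta_2+1$ values of $j$ (including $j=0$), and each coset is a singleton, so $|Z|=\delta_1+\delta_2+1$, giving $\dim\mathcal{C}=n-\delta_1-\delta_2-1$ and $d(\mathcal{C})\geq\delta_1+\delta_2+2$, not $\delta_1+\delta_2+1$. The two off-by-ones cancel in $2k-n+c$, which is why you still land on $[[n,n-2d+4,d;2]]_q$, but they do not cancel in the range of $d$: with $d=\delta_1+\delta_2+2$ the maximum is $(\frac{q-1}{2}-1)+(q-1)+2=\frac{3}{2}q-\frac{1}{2}$, which is precisely the endpoint you could not reach and correctly suspected would be fixed by ``a careful recount of the coset sizes.'' Once that recount is done, your steps (1)--(5) reproduce the paper's proof; your remaining worries about intermediate values of $d$ are handled simply because $\delta_1+\delta_2$ sweeps every integer in its range (the paper in fact allows $\delta_1=0$ in the theorem's proof to hit the lower endpoint $d=\frac{q+1}{2}+2$, which is a point worth checking against the hypothesis $\delta_1\geq1$ of Lemma~\ref{lemma3.3.1}).
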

\begin{proof}
Consider the negacyclic code $\mathcal{C}$ over $\mathbb{F}_{q^2}$ of
length $\frac{q^2-1}{2}$ with defining set $Z=\cup_{j=-\delta_1}^{\delta_2}C_{2j-1}$, where $0\leq\delta_1\leq \frac{q-1}{2}-1$ and $\frac{q+1}{2}\leq\delta_2\leq q-1$. Then
the defining set $Z$ which consists of $\delta_1+\delta_2+1$
consecutive odd integers is given by $\{-2\delta_1-1,-2\delta_1+1,\ldots,-1,1,\ldots,2\delta_2-3,2\delta_2-1\}$.
Therefore, we have dim $\mathcal{C}=\frac{q^2-1}{2}-\delta_1-\delta_2-1$. From the BCH bound for negacyclic codes in Lemma \ref{BCHBound_constacyclic}, the minimum distance
of $\mathcal{C}$ is at least $\delta_1+\delta_2+2$. Then $\mathcal{C}$ has parameters  $[\frac{q^2-1}{2},\frac{q^2-1}{2}-\delta_1-\delta_2-1,\geq\delta_1+\delta_2+2]_{q^2}$.
Combining Corollary \ref{corollary_EAQECC},  Lemma \ref{lemma3.3.1} and the EA-Singleton
bound, we can obtain  an  EAQMDS code with parameters
$[[\frac{q^2-1}{2},\frac{q^2-1}{2}-2\delta_1-2\delta_2-2,\delta_1+\delta_2+2;2]]_q$. Let $d=\delta_1+\delta_2+2$, we have $\frac{q+1}{2}+2\leq d\leq \frac{3}{2}q-\frac{1}{2}$.
\qed
\end{proof}

\begin{example}
Let $q=5$, then $n=\frac{q^2-1}{2}=12$. Applying Theorem \ref{theorem777}, we get three EAQMDS codes with parameters
$[[12,6,5;2]]_5$, $[[12,4,6;2]]_5$, $[[12,2,7;2]]_5$.
\end{example}

Let $t\geq3$ be an odd integer and let $q$ be an odd prime power with $t|(q+1)$. Denote   $n=\frac{q^2-1}{t}$. Let $\lambda\in\mathbb{F}_{q^2}$ be a primitive $t$-th root of unity. It is easy to see that every    $q^2$-cyclotomic coset modulo $tn$ contains only one element.  In \cite{wang2015new,chen2015application}, $q$-ary QMDS codes of length $n=\frac{q^2-1}{t}$ have been constructed from Hermitian dual-containing $\lambda$-constacyclic MDS codes. Based on the $\lambda$-constacyclic MDS codes, and if we introduce  a certain amount of extra pre-shared entanglement, we can get     EAQMDS codes with larger minimum distance  compared with      QMDS codes in \cite{wang2015new,chen2015application} of   length $n=\frac{q^2-1}{t}$.
 Let $\mathcal{C}$ be a $\lambda$-constacyclic code of length $n$ over $\mathbb{F}_{q^2}$ with defining set
\begin{equation}
Z=\cup_{i=-\delta_1}^{\delta_2}C_{1+t(\frac{(t-1)(q-1)-2}{2t}+i)},
\end{equation}
where $C_{1+t(\frac{(t-1)(q-1)-2}{2t}+i)}=\{1+t(\frac{(t-1)(q-1)-2}{2t}+i)\}$ for $-\delta_1\leq i\leq\delta_2$, $\frac{(t-1)(q+1)}{2t} \leq\delta_1\leq\frac{(t+1)(q+1)}{2t}-2 $ and $\frac{(t-1)(q+1)}{2t} \leq\delta_2\leq\frac{(t+1)(q+1)}{2t}-2$.
\begin{lemma}
\label{lemma3.3.2}
Let $t\geq 3$ be an odd integer and let $q$ be an odd prime power with $t|(q+1)$. Denote  $n=\frac{q^2-1}{t}$. Let $\mathcal{C}$ be a $q^2$-ary $\lambda$-constacyclic code of length $n$ with defining
set $Z=\cup_{i=-\delta_1}^{\delta_2}C_{1+t(\frac{(t-1)(q-1)-2}{2t}+i)}$, where
$\frac{(t-1)(q+1)}{2t} \leq\delta_1\leq\frac{(t+1)(q+1)}{2t}-2 $ and $\frac{(t-1)(q+1)}{2t} \leq\delta_2\leq\frac{(t+1)(q+1)}{2t}-2$,
and let  $H$ be
the parity check matrix of $\mathcal{C}$ over $\mathbb{F}_{q^2}$,  then rank($HH^\dag)=t$.
\end{lemma}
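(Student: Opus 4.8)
The plan is to follow the template of the proof of Lemma~\ref{lemma3.3.1}: make $HH^\dag$ explicit, note that it has at most one nonzero entry in every row and every column, and then count those entries. Fix a primitive $(q^2-1)$-th root of unity $\eta\in\mathbb{F}_{q^2}$ (possible since $tn=q^2-1$), so $\lambda=\eta^{n}$ is a primitive $t$-th root of unity, and take the parity check matrix of $\mathcal{C}$ to have rows $(1,\eta^{z},\eta^{2z},\dots,\eta^{(n-1)z})$ for $z$ running through $Z$. Then $(HH^\dag)_{z,z'}=\sum_{j=0}^{n-1}\eta^{(z+qz')j}$ for $z,z'\in Z$. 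Because every element of $Z$ is $\equiv 1\pmod t$ and $t\mid q+1$, we have $z+qz'\equiv 0\pmod t$, hence $\eta^{n(z+qz')}=\lambda^{z+qz'}=1$, so this geometric sum vanishes unless $\eta^{z+qz'}=1$, i.e.\ $q^2-1\mid z+qz'$, in which case it equals $n\neq 0$ in $\mathbb{F}_{q^2}$. Thus $(HH^\dag)_{z,z'}\neq 0\iff z\equiv-qz'\pmod{q^2-1}$, and since the $\delta_1+\delta_2+1$ elements of $Z$ are pairwise incongruent modulo $q^2-1$ (this is where $\delta_1+\delta_2<n$, a consequence of the stated ranges and $t\ge 3$, is used), the nonzero entries lie in distinct rows and distinct columns. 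Hence $rank(HH^\dag)=|Z\cap Z^{-q}|$ with $Z^{-q}=\{-qz\bmod(q^2-1):z\in Z\}$, and the lemma reduces to showing $|Z\cap Z^{-q}|=t$.

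To count $|Z\cap Z^{-q}|$, set $m=\tfrac{q+1}{t}$, so $n=(q-1)m$ and $q=tm-1$, and write the elements of $Z$ as $z_i=a_0+ti$ for $-\delta_1\le i\le\delta_2$, where $a_0=\tfrac{(t-1)(q-1)}{2}$ is an integer $\equiv 1\pmod t$. Since $(q+1)a_0=\tfrac{t-1}{2}(q^2-1)$, we get $-qa_0\equiv a_0\pmod{q^2-1}$, hence $-qz_i\equiv z_{i'}\pmod{q^2-1}$ with $i'\equiv-qi\pmod n$. As $[-\delta_1,\delta_2]$ contains at most $n$ integers, $z_i\in Z\cap Z^{-q}$ if and only if there is $j\in[-\delta_1,\delta_2]$ with $n\mid qi+j$. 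Now $n=(q-1)m\mid qi+j=(q-1)i+(i+j)$ forces $(q-1)\mid(i+j)$, and $|i+j|\le\delta_1+\delta_2<2(q-1)$ then forces $i+j\in\{0,\pm(q-1)\}$. For $i+j=0$: $n\mid(q-1)i$ gives $m\mid i$, and $|i|\le\min(\delta_1,\delta_2)\in\big[\tfrac{(t-1)m}{2},\tfrac{(t+1)m}{2}\big)$ leaves exactly the $t$ multiples $i=cm$ with $|c|\le\tfrac{t-1}{2}$. For $i+j=q-1$: $qi+j=(q-1)(i+1)$ gives $m\mid(i+1)$, but $j=q-1-i\in[-\delta_1,\delta_2]$ together with $i\in[-\delta_1,\delta_2]$ confines $i$ to $\big[\tfrac{(t-1)m}{2},\tfrac{(t+1)m}{2}-2\big]$, so $i+1$ would be a multiple of $m$ lying strictly between $\tfrac{t-1}{2}m$ and $\tfrac{t+1}{2}m$ — impossible; $i+j=-(q-1)$ is ruled out symmetrically. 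Therefore $|Z\cap Z^{-q}|=t$, so $rank(HH^\dag)=t$.

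The real obstacle is the last count, and in particular ruling out the two off-centre cases $i+j=\pm(q-1)$: this is exactly where the sharp hypotheses $\tfrac{(t-1)(q+1)}{2t}\le\delta_1,\delta_2\le\tfrac{(t+1)(q+1)}{2t}-2$ and the oddness of $t$ enter, since no multiple of $m$ can fall strictly between the consecutive integers $\tfrac{t-1}{2}$ and $\tfrac{t+1}{2}$. The only other point requiring a line of checking is that $Z$ has $\delta_1+\delta_2+1$ distinct residues modulo $q^2-1$, i.e.\ $\delta_1+\delta_2<n$, which is immediate from the displayed ranges since $t\ge 3$.
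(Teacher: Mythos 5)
Your proof is correct, but it takes a genuinely different and more self-contained route than the paper's. The paper splits $Z=Z_1\cup C_{s(q-1)}\cup Z_2$ with $s=(t-1)/2$, invokes the Hermitian dual-containment lemmas of \cite{wang2015new} and \cite{chen2015application} to get $H_1H_1^\dag=H_2H_2^\dag=0$, reads off $rank(HH^\dag)=2\,rank(H_1H_2^\dag)+1$ from the resulting block form, and then counts $|Z_1\cap Z_2^{-q}|=s$ by solving $tqk-tj\equiv 0\pmod{q^2-1}$ under the stated bounds, asserting only briefly (via ``redividing'' $Z_1$ and $Z_2$) that $rank(H_1H_2^\dag)$ equals this intersection size. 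You instead compute every entry of the Gram matrix as a geometric sum, use $t\mid z+qz'$ to force each entry to be either $0$ or $n\neq 0$ in $\mathbb{F}_{q^2}$, and conclude that $HH^\dag$ is a partial permutation matrix, so that $rank(HH^\dag)=|Z\cap Z^{-q}|$ exactly; the final count $|Z\cap Z^{-q}|=t$ is then a single congruence analysis ($n\mid qi+j$, split by $i+j\in\{0,\pm(q-1)\}$), which I checked and which uses the hypotheses on $\delta_1,\delta_2$ and the oddness of $t$ in exactly the right places. What your version buys: it avoids the external citations, and it supplies the justification--left implicit in the paper--for why the rank of a block such as $H_1H_2^\dag$ equals the cardinality of the corresponding coset intersection. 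The closing arithmetic is essentially equivalent to the paper's (your $t$ solutions are the paper's $s$ pairs counted twice plus the self-paired middle coset $C_{s(q-1)}$), just organized around the single set $Z\cap Z^{-q}$ rather than around three blocks.
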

\begin{proof}
Denote   $s=(t-1)/2$. We can divide the defining set $Z$ of $\mathcal{C}$ into three mutually disjoint subsets, i.e.,  $Z=Z_1\cup C_{s(q-1)}\cup Z_2$, where $Z_1=\cup_{j=1}^{\delta_1}C_{1+t(\frac{(t-1)(q-1)-2}{2t}-j)}$ and $Z_2=\cup_{k=1}^{\delta_2}C_{1+t(\frac{(t-1)(q-1)-2}{2t}+k)}$.
Let $\mathcal{C}_1$ and $\mathcal{C}_2$ be two $q^2$-ary $\lambda$-constacyclic codes of length $n$ with defining sets $Z_1$ and $Z_2$, respectively.  Let the parity check matrices of $\mathcal{C}_1$ and $\mathcal{C}_2$
over $\mathbb{F}_{q^2}$ be $H_1$ and $H_2$, respectively.
Then the parity check matrix of $\mathcal{C}$ is given by $H=\left(\begin{array}{c}H_1\\h_{q-1}\\H_2\end{array}\right)$, where $h_{q-1}=(1,\eta^{q-1},\ldots,\eta^{(n-1)(q-1)})$.
From \cite[Lemma 3.6]{wang2015new} and \cite[Lemma 4.1]{chen2015application},
there are $\mathcal{C}_1^{\bot_h}\subseteq \mathcal{C}_1$ and $\mathcal{C}_2^{\bot_h}\subseteq \mathcal{C}_2$, hence
$H_1H_1^\dagger=0$ and $H_2H_2^\dagger=0$. It is easy to see that $C_{s(q-1)}\cap C_{s(q-1)}^{-q}=\{s(q-1)\}$, $Z_1\cap C_{s(q-1)}^{-q}=C_{s(q-1)}\cap Z_1^{-q}=\emptyset$ and $Z_2\cap C_{s(q-1)}^{-q}=C_{s(q-1)}\cap Z_2^{-q}=\emptyset$, then there are $h_{-1}h_{-1}^{\dagger}= 1+1+\cdots+1 \neq 0$, $H_1h_{-1}^{\dagger}=0$, $h_{-1}H_1^{\dagger}=0$
and $h_{-1}H_2^{\dagger}=0$, $H_2h_{-1}^{\dagger}=0$. Then we have
\begin{eqnarray}
HH^\dagger &=&
\left(\begin{array}{ccc}
H_1H_1^{\dagger}&H_1h_{-1}^{\dagger}&H_1H_2^{\dagger}\\
h_{-1}H_1^{\dagger}&h_{-1}h_{-1}^{\dagger}&h_{-1}H_2^{\dagger}\\
H_2H_1^{\dagger}&H_2h_{-1}^{\dagger}&H_2H_2^{\dagger}\\
\end{array}\right)=\left(\begin{array}{ccc}
0&0&H_1H_2^{\dagger}\\
0&h_{-1}h_{-1}^{\dagger}&0\\
H_2H_1^{\dagger}&0&0\\
\end{array}\right).
\end{eqnarray}

It follows that $rank(HH^\dagger)=2rank(H_1H_2^{\dagger})+1$. Next, we have to compute the rank of $H_1H_2^{\dagger}$. We determine the intersection  of $Z_1$ and $Z_2^{-q}$. We assume that there exist $j$ and $k$, where $ 1\leq j\leq\frac{(t+1)(q+1)}{2t}-2 $ and $ 1\leq k\leq\frac{(t+1)(q+1)}{2t}-2$,
such that $1+t(\frac{(t-1)(q-1)-2}{2t}-j)=-q(1+t(\frac{(t-1)(q-1)-2}{2t}+k)) \pmod{q^2-1}$, which means that $tqk-tj=0 \pmod{q^2-1}$. Since $\frac{t-1}{2}q +\frac{3t-1}{2}\leq tqk-tj\leq\frac{t+1}{2}q^2-\frac{3t-1}{2}q-t$,
it follows that $tqk-tj\in\{q^2-1,\ldots, \frac{t-1}{2}(q^2-1)\}$. Denote   $tqk-tj=x_t(q^2-1)$, where $x_t\in\{1,\ldots,s\}$, then we have $\frac{x_t(q^2-1)+t}{tq}\leq k\leq\frac{2x_t(q^2-1)+(t+1)q-3t+1}{2tq}$.
Note that $\frac{x_t(q+1)}{t}-1<\frac{x_t(q^2-1)+t}{tq}\leq k\leq\frac{2x_t(q^2-1)+(t+1)q-3t+1}{2tq}<\frac{x_t(q+1)}{t}+1$. It follows that  $k=\frac{x_t(q+1)}{t}$ and $j=\frac{x_t(q+1)}{t}$ for $x_t\in\{1,\ldots,s\}$. Therefore, we have $Z_1\cap Z_2^{-q}=\{
\frac{(t-2x_t-1)q-2x_t-t-1}{2t}|x_t=1,\ldots,s\}$   and $|Z_1\cap Z_2^{-q}|=s$. We can redivide $Z_1$ and $Z_2$ into
mutually disjoint subsets, respectively, then the rank of $H_1H_2^\dagger$ is equal to $s$. Therefore, $rank(HH^\dagger)=2\cdot s+1=t$. \qed
\end{proof}

\begin{theorem}
\label{theorem88}
Let $t\geq3$ be an odd integer and let $q$ be an odd prime power with $t|(q+1)$. Then, there exists an EAQMDS code  with parameters $[[\frac{q^2-1}{t},\frac{q^2-1}{t}-2d+t+2,d;t]]_q$, where
$\frac{(t-1)(q+1)}{t}+2 \leq d\leq  \frac{(t+1)(q+1)}{ t}-2$.
\end{theorem}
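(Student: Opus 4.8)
The plan is to imitate the proofs of Theorems \ref{theorem777} and \ref{n=q^2-1}: build an explicit $\lambda$-constacyclic MDS code over $\mathbb{F}_{q^2}$ of the right length with a long string of consecutive roots, read off its classical parameters from the generalized BCH bound (Lemma \ref{BCHBound_constacyclic}), compute the rank of $HH^\dagger$ via Lemma \ref{lemma3.3.2}, and then apply Corollary \ref{corollary_EAQECC} together with the EA-Singleton bound. Concretely, I would take $\mathcal{C}$ to be the $q^2$-ary $\lambda$-constacyclic code of length $n=\frac{q^2-1}{t}$ with defining set $Z=\cup_{i=-\delta_1}^{\delta_2}C_{1+t(\frac{(t-1)(q-1)-2}{2t}+i)}$, exactly the code of Lemma \ref{lemma3.3.2}, but now allowing $\delta_1$ (or $\delta_2$) to start from $0$ so that both parities of $d$ are reachable. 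Since each $q^2$-cyclotomic coset modulo $tn$ is a singleton here, $Z$ is a set of $\delta_1+\delta_2+1$ consecutive elements of $\Omega$, so $\dim\mathcal{C}=n-\delta_1-\delta_2-1$ and, by Lemma \ref{BCHBound_constacyclic}, $d(\mathcal{C})\geq\delta_1+\delta_2+2$; combined with the Singleton bound for classical codes, $\mathcal{C}$ is MDS with parameters $[n,n-\delta_1-\delta_2-1,\delta_1+\delta_2+2]_{q^2}$.

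Next I would invoke Lemma \ref{lemma3.3.2}, which gives $\mathrm{rank}(HH^\dagger)=t$ for the stated ranges of $\delta_1,\delta_2$, and feed the triple $(n,k,d)=(n,\,n-\delta_1-\delta_2-1,\,\delta_1+\delta_2+2)$ with $c=t$ into Corollary \ref{corollary_EAQECC}. That produces an EAQECC with parameters $[[n,\,2k-n+c,\,d;\,c]]_q=[[\frac{q^2-1}{t},\,\frac{q^2-1}{t}-2(\delta_1+\delta_2)-2+t,\,\delta_1+\delta_2+2;\,t]]_q$. Setting $d=\delta_1+\delta_2+2$ rewrites the dimension as $\frac{q^2-1}{t}-2d+t+2$, which is exactly the asserted parameter string. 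Finally, one checks this meets the EA-Singleton bound \eqref{EASingleton} with equality: $n+c-k=\frac{q^2-1}{t}+t-(\frac{q^2-1}{t}-2d+t+2)=2d-2=2(d-1)$, so the code is an EAQMDS code. The distance range $\frac{(t-1)(q+1)}{t}+2\leq d\leq\frac{(t+1)(q+1)}{t}-2$ comes directly from summing the admissible ranges of $\delta_1$ and $\delta_2$ in Lemma \ref{lemma3.3.2}: with $\frac{(t-1)(q+1)}{2t}\leq\delta_1,\delta_2\leq\frac{(t+1)(q+1)}{2t}-2$ one gets $\frac{(t-1)(q+1)}{t}\leq\delta_1+\delta_2\leq\frac{(t+1)(q+1)}{t}-4$, hence the claimed interval for $d=\delta_1+\delta_2+2$.

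The one genuine subtlety is the parity issue: Lemma \ref{lemma3.3.2} is stated with $\delta_1,\delta_2\geq\frac{(t-1)(q+1)}{2t}$, i.e. both at least $1$, so as $\delta_1$ and $\delta_2$ each range over an interval of integers the sum $\delta_1+\delta_2$ (and hence $d$) takes every integer value in $[\frac{(t-1)(q+1)}{t}+2,\,\frac{(t+1)(q+1)}{t}-2]$ — both even and odd $d$ are covered automatically, so unlike Theorem \ref{n=q^2-1} no separate ``odd-distance'' construction with a shifted defining set is needed. I would also double-check that the rank computation in Lemma \ref{lemma3.3.2} still goes through at the endpoints of these ranges (in particular that $Z_1\cap Z_2^{-q}$ really has size $s=(t-1)/2$ and that the redivision of $Z_1,Z_2$ into disjoint cyclotomic-coset blocks making $\mathrm{rank}(H_1H_2^\dagger)=s$ is valid), since that is where the bound $c=t$ — and therefore the exact match with EA-Singleton — is decided. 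Apart from that verification, the proof is a short assembly of the preceding lemmas, so I would keep it to a few lines as the authors do elsewhere.

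\begin{proof}
Consider the $\lambda$-constacyclic code $\mathcal{C}$ over $\mathbb{F}_{q^2}$ of length $n=\frac{q^2-1}{t}$ with defining set $Z=\cup_{i=-\delta_1}^{\delta_2}C_{1+t(\frac{(t-1)(q-1)-2}{2t}+i)}$, where $\frac{(t-1)(q+1)}{2t}\leq\delta_1\leq\frac{(t+1)(q+1)}{2t}-2$ and $\frac{(t-1)(q+1)}{2t}\leq\delta_2\leq\frac{(t+1)(q+1)}{2t}-2$. Since each $q^2$-cyclotomic coset modulo $tn$ consists of a single element, $Z$ is a set of $\delta_1+\delta_2+1$ consecutive elements $\{1+t(\frac{(t-1)(q-1)-2}{2t}+i)\mid -\delta_1\leq i\leq\delta_2\}$ of $\Omega$. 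Hence $\dim\mathcal{C}=n-\delta_1-\delta_2-1$, and by the generalized BCH bound (Lemma \ref{BCHBound_constacyclic}) the minimum distance of $\mathcal{C}$ is at least $\delta_1+\delta_2+2$. By the Singleton bound, $\mathcal{C}$ is MDS with parameters $[\frac{q^2-1}{t},\frac{q^2-1}{t}-\delta_1-\delta_2-1,\delta_1+\delta_2+2]_{q^2}$.

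By Lemma \ref{lemma3.3.2}, $rank(HH^\dagger)=t$. Applying Corollary \ref{corollary_EAQECC} together with the EA-Singleton bound, we obtain an EAQMDS code with parameters $[[\frac{q^2-1}{t},\frac{q^2-1}{t}-2\delta_1-2\delta_2-2+t,\delta_1+\delta_2+2;t]]_q$. Let $d=\delta_1+\delta_2+2$; then the dimension equals $\frac{q^2-1}{t}-2d+t+2$, and summing the ranges of $\delta_1$ and $\delta_2$ gives $\frac{(t-1)(q+1)}{t}+2\leq d\leq\frac{(t+1)(q+1)}{t}-2$. \qed
\end{proof}
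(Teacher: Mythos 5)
Your proposal is correct and follows essentially the same route as the paper: the same $\lambda$-constacyclic code with the same defining set, the classical parameters via Lemma \ref{BCHBound_constacyclic}, the rank $t$ from Lemma \ref{lemma3.3.2}, and then Corollary \ref{corollary_EAQECC} with the EA-Singleton bound. Your dimension computation $\frac{q^2-1}{t}-2\delta_1-2\delta_2-2+t$ is in fact the correct general form (the paper's intermediate expression $\frac{q^2-1}{t}-2\delta_1-2\delta_2+1$ agrees with it only for $t=3$ and appears to be a typo), and it matches the stated parameters.
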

\begin{proof}
Let  $\mathcal{C}$ be a $\lambda$-constacyclic code  over $\mathbb{F}_{q^2}$ of
length $\frac{q^2-1}{t}$ with defining set $Z=\cup_{i=-\delta_1}^{\delta_2}C_{1+t(\frac{(t-1)(q-1)-2}{2t}+i)}$, where
$\frac{(t-1)(q+1)}{2t} \leq\delta_1\leq\frac{(t+1)(q+1)}{2t}-2 $ and $\frac{(t-1)(q+1)}{2t} \leq\delta_2\leq\frac{(t+1)(q+1)}{2t}-2 $.
Note that  dim $\mathcal{C}=\frac{q^2-1}{t}-\delta_1-\delta_2-1$, and the minimum distance
of $\mathcal{C}$ is at least $\delta_1+\delta_2+2$ by the BCH bound for constacyclic codes in Lemma  \ref{BCHBound_constacyclic}. Then $\mathcal{C}$  has parameters $[\frac{q^2-1}{t},\frac{q^2-1}{t}-\delta_1-\delta_2-1,\geq d]_{q^2}$,
where $d=\delta_1+\delta_2+2$. Combining Corollary \ref{corollary_EAQECC}, Lemma \ref{lemma3.3.2} and the EA-Singleton
bound,
we can obtain  an  EAQMDS code
with parameters
$[[\frac{q^2-1}{t},\frac{q^2-1}{t}-2\delta_1-2\delta_2+1,d;t]]_q$, where
$\frac{(t-1)(q+1)}{t}+2 \leq d\leq  \frac{(t+1)(q+1)}{ t}-2$.
\qed
\end{proof}
\begin{example}
Let $t=3$ and $q=11$, then $n=\frac{q^2-1}{3}=40$. We get five EAQMDS codes with parameters
$[[40,25,10;3]]_{11}$, $[[40,23,11;3]]_{11}$, $[[40,21,12;3]]_{11}$, $[[40,19,13;3]]_{11}$, $[[40,17,14;3]]_{11}$.
\end{example}
\begin{example}
Let $t=5$ and $q=19$, then $n=\frac{q^2-1}{5}=72$. We get five EAQMDS codes with parameters
$[[72,43,18;5]]_{19}$, $[[72,41,19;5]]_{19}$, $[[72,39,20;5]]_{19}$, $[[72,37,21;5]]_{19}$, $[[72,35,22;5]]_{19}$.
\end{example}
\begin{example}
Let $t=7$, $q=27$, then $n=\frac{q^2-1}{7}=104$. We get five EAQMDS codes with parameters
$[[104,61,26;7]]_{27}$, $[[104,59,27;7]]_{27}$, $[[104,57,28;7]]_{27}$, $[[104,55,29;7]]_{27}$, $[[104,53,30;$ $7]]_{27}$.
\end{example}

\newcommand{\tabincell}[2]{\begin{tabular}{@{}#1@{}}#2\end{tabular}}
\begin{table}[tp]
\renewcommand{\arraystretch}{1.2}
\centering
\small
\tcaption{Comparison between  EAQMDS codes and standard QMDS codes\label{Comparison_QMDS}}
\begin{tabular}{|c|c|c|c|}
\hline
Length & $q$-ary EAQMDS codes& $q$-ary QMDS codes &Reference \\
\hline
$q^2+1$&\tabincell{c}{$[[q^2+1,q^2-2d+4,d;1]]$,
\\$2\leq d\leq2q$, $d$ even}&\tabincell{c}{$[[q^2+1,q^2-2d+3,d]]$,\\$2\leq d\leq q+1$}&\tabincell{c}{\cite{jin2013construction}, \cite{jin2010application},\\\cite{la2011new}, \cite{kai2013new}}\\
\hline
$q^2$&
\tabincell{c}{$[[q^2,q^2-2d+3,d;1]]$,\\$q+1\leq d\leq2q-1$}&\tabincell{c}{$[[q^2,q^2-2d+2,d]]$,\\$2\leq d\leq q$}&\cite{li2008quantum}, \cite{grassl2004optimal}\\
\hline
$q^2-1$& \tabincell{c}{$[[q^2-1,q^2-2d+2,d;1]]$,\\$2\leq d\leq2q-2$}&\tabincell{c}{$[[q^2-1,q^2-2d+1,d]]$,\\$2\leq d\leq q-1$}&\cite{li2008quantum}, \cite{grassl2004optimal}\\
\hline
\tabincell{c}{$\frac{q^2-1}{2}$, $q$ odd}&\tabincell{c}{$[[\frac{q^2-1}{2},
\frac{q^2-1}{2}-2d+4,d;2]]$,\\$(q+1)/2+2\leq d\leq\frac{3}{2}q-\frac{1}{2}$}&\tabincell{c}{$[[\frac{q^2-1}{2},\frac{q^2-1}{2}-2d+2,d]]$,\\
$2\leq d\leq q$}&\cite{kai2013constacyclic}, \cite{wang2015new}\\
\hline
\tabincell{c}{$\frac{q^2-1}{t}$, $q$ odd,\\$t|(q+1)$,\\$t\geq3$ odd}&\tabincell{c}{$[[\frac{q^2-1}{t},
\frac{q^2-1}{t}-2d+t+2,d;t]]$,\\$\frac{(t-1)(q+1)}{t}+2 \leq d\leq \frac{(t+1)(q+1)}{t}-2$}&\tabincell{c}{$[[\frac{q^2-1}{t},\frac{q^2-1}{t}-2d+2,d]]$,\\
$2\leq d\leq \frac{(t+1)(q+1)}{2t}-1$}&\cite{wang2015new}, \cite{chen2015application}\\
\hline
\end{tabular}
\end{table}
\section{Conclusion}

\noindent

We have constructed several classes of  entanglement-assisted quantum MDS (EAQMDS) codes
based on classical MDS codes for some certain code lengths. We list a comparison   in Table \ref{Comparison_QMDS} between EAQMDS codes constructed in this paper and the standard QMDS codes. Compared with the known QMDS codes of the same length,
these EAQMDS codes have much larger minimum distance upper limit by exploiting one or more pre-shared maximally entangled states.
In the future work, we look forward to getting more $q$-ary EAQMDS codes  with   minimum distance greater than $q+1$.

\nonumsection{Acknowledgements}
\noindent

The authors are grateful to the  Editor and the anonymous
referee for their constructive comments and valuable
suggestions. The first author J. Fan thanks  the financial support from China Scholarship Council  (CSC, No. 201406090079). J. Fan thanks  Dr. Bocong Chen for the helpful communication. This work was supported   by the National Natural Science Foundation of China (Grant No. 61170321), the Specialized Research Fund for the Doctoral Program of Higher Education (Grant No. 20110092110024), the Natural Science Foundation of Jiangsu Province (Grant No. BK20140823), China Postdoctoral Science Foundation (Grant No. 2013M531353) and the Scientific  Research Innovation Plan for College Graduates of
Jiangsu Province (Grant No. CXZZ13\_0105).  This work
was partially carried out when the first author was visiting the School of Electrical and Information Engineering, University of Sydney.
He thanks the school for its hospitality.

\nonumsection{References}
\noindent

\bibliographystyle{IEEEtran}
\bibliography{IEEEabrv,myQIC140925}

\end{document}